\documentclass[twoside]{article}
\DeclareRobustCommand{\removeblock}[1]{
\relax
}
\usepackage[utf8]{inputenc}
\usepackage[T1]{fontenc}
\usepackage{anyfontsize}
\AtBeginDocument{\fontsize{10.5}{12.6}\selectfont}
\usepackage{amsmath, amssymb, amsfonts, amsthm}
\usepackage{mathtools}
\usepackage{enumitem}
\usepackage[numbers]{natbib}
\usepackage{graphicx}
\usepackage{booktabs}
\usepackage{siunitx}
\usepackage{lmodern} 
\usepackage{microtype}
\usepackage[table, dvipsnames]{xcolor}
\usepackage[most]{tcolorbox}
\usepackage[normalem]{ulem}
\definecolor{oxfordblue}{rgb}{0.0, 0.13, 0.28}
\definecolor{linkblue}{rgb}{0.0, 0.20, 0.40}
\usepackage{hyperref} 
\usepackage{nicematrix}
\hypersetup{
    colorlinks = true,
    linkcolor = linkblue,
    anchorcolor = linkblue,
    citecolor = linkblue,
    filecolor = linkblue,
    urlcolor = linkblue
}

\definecolor{limegreen}{rgb}{0.2, 0.8, 0.2}
\definecolor{darkgreen}{rgb}{0.0, 0.5, 0.0}
\definecolor{darkred}{rgb}{0.6, 0.0, 0.0}
\definecolor{bondiblue}{rgb}{0.0, 0.58, 0.71}
\definecolor{britishracinggreen}{rgb}{0.0, 0.26, 0.15}
\definecolor{faintgray}{RGB}{245,245,245}
\definecolor{faintborder}{RGB}{230,230,230}
\definecolor{lightblack}{gray}{0.4}
\definecolor{pinegreen}{rgb}{0.0, 0.47, 0.44}
\usepackage{graphicx}

\usepackage[T1]{fontenc}    
\usepackage{booktabs}       

\usepackage{amsfonts,amssymb,bbm}
\usepackage{amsmath}
\usepackage{enumitem}
\usepackage{graphicx}
\usepackage{xcolor}
\usepackage{adjustbox}
\usepackage[UKenglish]{isodate}
\usepackage{graphicx}
\usepackage{silence}
\WarningFilter{caption}{Unknown document class (or package)}
\usepackage[format=default,font=small,labelfont=it]{caption}
\usepackage{subcaption}
\usepackage{hyperref} 
    \hypersetup{
    	colorlinks = true,
    	linkcolor = blue,
    	anchorcolor = blue,
    	citecolor = blue,
    	filecolor = blue,
    	urlcolor = blue
    }

\usepackage{float}
\usepackage{enumitem}
\usepackage{multirow}
\usepackage{fancyvrb}
\usepackage{rotating}
    \usepackage{tikz}
    \usepackage{tikz-cd} 
    \pgfdeclarelayer{edgelayer}
    \pgfdeclarelayer{nodelayer}
    \pgfsetlayers{edgelayer,nodelayer,main}
    \tikzstyle{new style 0}=[fill={rgb,255: red,255; green,94; blue,247}, draw=black, shape=circle]
    \tikzstyle{pointy}=[fill=white, draw=black, shape=circle]
    \tikzstyle{pointy}=[->]
\usepackage{fontawesome}

\usepackage{algpseudocode}
\usepackage[linesnumbered,lined,boxed,commentsnumbered,ruled,longend]{algorithm2e}

\SetCommentSty{mycommfont}

\raggedbottom

\setlength\heavyrulewidth{0.3ex}
\usepackage{lscape}

\makeatletter
\newcommand{\pushright}[1]{\ifmeasuring@#1\else\omit\hfill$\displaystyle#1$\fi\ignorespaces}
\newcommand{\pushleft}[1]{\ifmeasuring@#1\else\omit$\displaystyle#1$\hfill\fi\ignorespaces}
\makeatother

\newcommand{\1}{\mathbbm{1}}

\renewcommand{\phi}{\varphi}






\newcommand{\E}{\mathbb{E}}

\renewcommand{\P}{\mathbb{P}}
\newcommand{\Q}{\mathbb{Q}}
\newcommand{\R}{\mathbb{R}}

\newcommand{\K}{\mathbb{K}}







\usepackage{etoolbox}
\newcommand{\addQEDstyle}[2]{\AtBeginEnvironment{#1}{\pushQED{\qed}\renewcommand{\qedsymbol}{#2}}\AtEndEnvironment{#1}{\popQED}}
\addQEDstyle{remark}{$\triangleleft$}
\addQEDstyle{example}{$\triangleleft$}

\NewDocumentCommand{\luca}{mo}{
    \IfValueF{#2}{
                        {{\scriptsize
                            \textcolor{green}{ 
                            \textbf{L:}
                            \textit{{#1}}
                            }
                        }}
        }
    \IfValueT{#2}{
                        \marginnote{{\scriptsize
                            \textcolor{green}{ 
                            \textbf{L:}
                            \textit{{#1}}
                            }
                        }}
        }
                    }
\NewDocumentCommand{\giulia}{mo}{
    \IfValueF{#2}{
                        {{\scriptsize
                            \textcolor{red}{ 
                            \textbf{GL:}
                            \textit{{#1}}
                            }
                        }}
        }
    \IfValueT{#2}{
                        \marginnote{{\scriptsize
                            \textcolor{red}{ 
                            \textbf{GL:}
                            \textit{{#1}}
                            }
                        }}
        }
}
\NewDocumentCommand{\anastasis}{mo}{
    \IfValueF{#2}{
                        {{\scriptsize
                            \textcolor{violet}{ 
                            \textbf{A:}
                            \textit{{#1}}
                            }
                        }}
        }
    \IfValueT{#2}{
                        \marginnote{{\scriptsize
                            \textcolor{violet}{ 
                            \textbf{A:}
                            \textit{{#1}}
                            }
                        }}
        }
                    }
                    
\NewDocumentCommand{\cody}{mo}{
    \IfValueF{#2}{
                        {{\scriptsize
                            \textcolor{orange}{ 
                            \textbf{A:}
                            \textit{{#1}}
                            }
                        }}
        }
    \IfValueT{#2}{
                        \marginnote{{\scriptsize
                            \textcolor{orange}{ 
                            \textbf{A:}
                            \textit{{#1}}
                            }
                        }}
        }
                    }

\NewDocumentCommand{\yannick}{mo}{
    \IfValueF{#2}{
                        {{\scriptsize
                            \textcolor{cyan}{ 
                            \textbf{Y:}
                            \textit{{#1}}
                            }
                        }}
        }
    \IfValueT{#2}{
                        \marginnote{{\scriptsize
                            \textcolor{cyan}{ 
                            \textbf{Y:}
                            \textit{{#1}}
                            }
                        }}
        }
                    } 

\definecolor{darkgreen}{rgb}{0.0, 0.2, 0.13}
\NewDocumentCommand{\xuwei}{mo}{
    \IfValueF{#2}{
                        {{\scriptsize
                            \textcolor{darkgreen}{ 
                            \textbf{X:}
                            \textit{{#1}}
                            }
                        }}
        }
    \IfValueT{#2}{
                        \marginnote{{\scriptsize
                            \textcolor{darkgreen}{ 
                            \textbf{X:}
                            \textit{{#1}}
                            }
                        }}
        }
                    }

\usepackage{appendix}

\usepackage{cleveref}
\newcounter{termcounter}
\renewcommand{\thetermcounter}{\Roman{termcounter}}
\crefname{term}{term}{terms}
\creflabelformat{term}{#2\textup{(#1)}#3}


\makeatletter
\def\term{\@ifnextchar[\term@optarg\term@noarg}
\def\term@optarg[#1]#2{%
  \textup{#1}%
  \def\@currentlabel{#1}%
  \def\cref@currentlabel{[][2147483647][]#1}%
  \cref@label[term]{#2}}
\def\term@noarg#1{%
  \refstepcounter{termcounter}%
  \textup{(\thetermcounter)}%
  \cref@label[term]{#1}}
\makeatother

\crefname{lemma}{lemma}{lemmata}
\Crefname{lemma}{Lemma}{Lemmata}
\crefname{assumption}{assumption}{assumptions}
\Crefname{assumption}{Assumption}{Assumptions}
\crefname{example}{Example}{Examples}
\crefname{proposition}{Proposition}{Proposition}
\crefname{algorithm}{algorithm}{algorithms}
\Crefname{algorithm}{Algorithm}{Algorithms}

\crefname{appendix}{appendix}{appendices}
\Crefname{appendix}{Appendix}{Appendices}

\usepackage{comment}

\usepackage{microtype}

\definecolor{bleudefrance}{rgb}{0.19, 0.55, 0.91}
\definecolor{britishracinggreen}{rgb}{0.0, 0.26, 0.15}
\definecolor{darkcyan}{rgb}{0.0, 0.55, 0.55}
\definecolor{MidnightBlue}{RGB}{25,25,112}
\definecolor{MidnightBlueComplementingGreen}{RGB}{25,112,25}
\definecolor{MidnightBlueComplementingPurple}{RGB}{112,25,112}
\definecolor{MidnightBlueComplementingRed}{RGB}{112,25,69}
\definecolor{darkmidnightblue}{rgb}{0.0, 0.2, 0.4}

\definecolor{WowColor}{rgb}{.75,0,.75}
\definecolor{MildlyAlarming}{rgb}{0.85,0.25,0.1}
\definecolor{SubtleColor}{rgb}{0,0,.50}
\definecolor{antiquefuchsia}{rgb}{0.57, 0.36, 0.51}
\definecolor{fashionfuchsia}{rgb}{0.96, 0.0, 0.63}
\definecolor{jade}{rgb}{0.0, 0.66, 0.42}
\definecolor{caribbeangreen}{rgb}{0.0, 0.8, 0.6}
\definecolor{aquamarine}{rgb}{0.5, 0.8, 0.85}
\definecolor{lightseagreen}{rgb}{0.13, 0.7, 0.67}
\definecolor{darkgreen}{rgb}{0.0, 0.2, 0.13}
\definecolor{darkspringgreen}{rgb}{0.09, 0.45, 0.27}
\definecolor{attentioncolor}{RGB}{152,90,81}
\definecolor{burgred}{RGB}{40,3,22}
\definecolor{AnnieGreen}{RGB}{17,123,92}
\definecolor{Turquoise}{RGB}{64,224,208}

\definecolor{darkjade}{RGB}{0,122,84}
\definecolor{Window1}{RGB}{92,150,31}%
    \definecolor{Window1dark}{RGB}{41,67,13}%
\definecolor{Window2}{RGB}{255,168,28}
    \definecolor{Window2dark}{RGB}{114,75,12}
\definecolor{Window3}{RGB}{255,96,33}
    \definecolor{Window3dark}{RGB}{97,36,12}
\definecolor{InputColor}{RGB}{20,255,177}
    \definecolor{InputColorlight}{RGB}{222,237,229}

\definecolor{RedAlizarin}{rgb}{0.82, 0.1, 0.26}
\usepackage{soul}

\usepackage[colorinlistoftodos]{todonotes}

\NewDocumentCommand{\Takashi}{mo}{
    \IfValueF{#2}{
                        {{
                            \textcolor{magenta}{ 
                            [\textbf{Takashi:}
                            \textit{{#1}}]
                            }
                        }}
        }
    \IfValueT{#2}{
                        \marginnote{{\scriptsize
                            \textcolor{magenta}{ 
                            \textbf{T:}
                            \textit{{#1}}
                            }
                        }}
        }
                    }

\NewDocumentCommand{\AK}{mo}{
    \IfValueF{#2}{\hfill\\
                        {{
                            \textcolor{darkmidnightblue}{ 
                            \textbf{AK:}
                            \textit{{#1}}
                            }
                        }}
        }
    \IfValueT{#2}{
                        \marginnote{{\scriptsize
                            \textcolor{darkmidnightblue}{ 
                            \textbf{AK:}
                            \textit{{#1}}
                            }
                        }}
        }
                    }

\NewDocumentCommand{\yan}{mo}{
    \IfValueF{#2}{\hfill\\
                        {{
                            \textcolor{cyan}{ 
                            \textbf{YL:}
                            \textit{{#1}}
                            }
                        }}
        }
    \IfValueT{#2}{
                        \marginnote{{\scriptsize
                            \textcolor{cyan}{ 
                            \textbf{YL:}
                            \textit{{#1}}
                            }
                        }}
        }
                    }

\NewDocumentCommand{\RS}{mo}{
    \IfValueF{#2}{\hfill\\
                        {{
                            \textcolor{WowColor}{ 
                            \textbf{RS:}
                            \textit{{#1}}
                            }
                        }}
        }
    \IfValueT{#2}{
                        \marginnote{{\scriptsize
                            \textcolor{WowColor}{ 
                            \textbf{RS:}
                            \textit{{#1}}
                            }
                        }}
        }
}

\NewDocumentCommand{\HB}{mo}{
    \IfValueF{#2}{\hfill\\
                        {{
                            \textcolor{violet}{ 
                            \textbf{HB:}
                            \textit{{#1}}
                            }
                        }}
        }
    \IfValueT{#2}{
                        \marginnote{{\scriptsize
                            \textcolor{violet}{ 
                            \textbf{HB:}
                            \textit{{#1}}
                            }
                        }}
        }
                    }

\NewDocumentCommand{\BH}{mo}{
    \IfValueF{#2}{\hfill\\
                        {{
                            \textcolor{bleudefrance}{ 
                            \textbf{BH:}
                            \textit{{#1}}
                            }
                        }}
        }
    \IfValueT{#2}{
                        \marginnote{{\scriptsize
                            \textcolor{bleudefrance}{ 
                            \textbf{BH:}
                            \textit{{#1}}
                            }
                        }}
        }
                    }


\DeclareRobustCommand{\R}{\mathbb{R}}

\theoremstyle{plain}
\newtheorem{theorem}{Theorem}[section]        
\newtheorem{proposition}[theorem]{Proposition}

\theoremstyle{definition}

\newtheorem{definition}[theorem]{Definition}

\theoremstyle{remark}
\newtheorem{remark}[theorem]{Remark}

\DeclareRobustCommand{\supstack}[2]{\stackrel{\mbox{{\scriptsize ${#1}$}}}{ {#2} }}     

\newcounter{question}
\newtcolorbox[auto counter, use counter=question]{question}[1][]{
  enhanced,
  colback=faintgray,
  colframe=faintborder,
  boxrule=0.2pt,
  arc=2mm,
  title=\textcolor{lightblack}{\textbf{Question~\thequestion}},
  fonttitle=\bfseries,
  before upper={\centering\itshape},
  after title={\vspace{0.5ex}},
  boxsep=4pt,
  left=6pt,
  right=6pt,
  top=4pt,
  bottom=4pt,
  #1
}

\usepackage[a4paper, top=2.5cm, bottom=2.5cm, left=2.7cm, right=2.7cm]{geometry}


\setcounter{topnumber}{3}
\setcounter{bottomnumber}{3}
\setcounter{totalnumber}{5}

\setlength{\textfloatsep}{8pt plus 2pt minus 2pt} 
\setlength{\floatsep}{8pt plus 2pt minus 2pt}     


\title{
    \hrule height 4pt
    \vspace{0.25 in}
    {\Huge\bfseries SANOS} \\ [0.4em] 
    {\Large Smooth strictly Arbitrage-free Non-parametric Option Surfaces}
    \vspace{0.25in} 
    \hrule height 1pt
}

\author{%
  \hspace{-.18cm} 
  \textbf{Hans~Buehler}\textsuperscript{1}%
  \footnote{All authors listed in alphabetical order.}%
  \,\,\footnote{Corresponding author.}
  \quad
  \textbf{Blanka~Horvath}\textsuperscript{1,2} 
  \quad
  \textbf{Anastasis~Kratsios}\textsuperscript{3,4,5} 
  \\[0.5em] 
  \hspace{-.18cm} 
  \textbf{Yannick~Limmer}\textsuperscript{6}  
  \quad 
  \textbf{Raeid~Saqur}\textsuperscript{1,5}  \\[1em]
  \parbox{0.9\textwidth}{\centering\small
    \textsuperscript{1}Mathematical Institute, University of Oxford \\
    \textsuperscript{2}Oxford-Man Institute for Quantitative Finance, University of Oxford \\
    \textsuperscript{3}Department of Mathematics, McMaster University \\
    \textsuperscript{4} The Ennio De Giorgi Mathematical Research Centre, Scuola Normale Superiore di Pisa\\
    \textsuperscript{5}Vector Institute \\
    \textsuperscript{6}DRW
  }
}


\date{May 22nd, 2026{\small \\First version January 14th, 2026}}

\begin{document}


 \maketitle
\begin{abstract}
    
   We present a simple, numerically efficient non-parametric method to construct representations of option price surfaces
    $\hat C(T,K)$ which are both smooth and strictly arbitrage-free across time and strike. 
    The method can be viewed as a smooth generalization of the widely-known linear interpolation scheme, 
    and retains the simplicity and transparency of that baseline. Our method is able to provide smooth fits
    to full option surfaces from 0DTE to the farthest expiry in seconds.

Call option prices for arbitrary strikes $K$ and some expiry $T_j$ are represented as convex combinations of Black--Scholes call payoffs anchored at strikes $K^i_j$ with variances~$V_j$ 
with weights~$q_j$\vspace{-0.3cm}
$$
\vspace{-0.1cm}\hat C(T_j,K)=\sum_{i=0}^{N_j} q_j^i\,\mathrm{Call}(K^i_j,K,V_j).\vspace{-0.05cm}
$$ 
which are free of arbitrage across time and strike. This is extended to the full model $\hat C(T,K)$ to include arbitrary non-quoted expiries. The~$q_j$'s are shown  to have a natural interpretation as discrete space transition densities.

Calibration of the model to observed market quotes on different expiries and strikes than those supporting the model is formulated as a linear program, allowing bid--ask spreads to be incorporated directly via linear penalties or inequalities, and delivering materially lower computational cost than most of the currently available implied-volatility surface fitting routines.

As a further contribution, we derive an equivalent parameterization of the proposed surface in terms of strictly positive “discrete local volatility” \cite{DLV} variables. This yields, to our knowledge, the first construction of smooth, strictly arbitrage-free option price surfaces while requiring only trivial parameter constraints (positivity). We illustrate the approach using S\&P\,500 index options.
\end{abstract}

\section{Introduction}

This article tackles the problem of fitting a discrete set of observed European option bid-ask prices efficiently and develops  a \textit{smooth, strictly arbitrage-free, non-parametric surface} representation of option prices to model prices for a full continuous range of strikes and maturities beyond the observed market quotes. As long as bid/ask spreads of option prices
are not zero, and if the market is free of arbitrage, we can always find a version of our model which fits the market.
We also present a version of our model which requires only positivity of the input parameters,
making our model the first known smooth strictly arbitrage-free model whose parameters are subject to only trivial constraints. \\

Figure~\ref{fig:initialexample} makes the strength of the calibration immediately apparent. In a way, these plots speak more than a thousand words: Across the full range of strikes and maturities shown, the model curves closely track the observed S\&P\,500 option data, with deviations that are difficult to discern at the scale of the plots and no evident systematic bias in any particular region of the surface. Notably, this high-quality fit is achieved with negligible runtime—solving a single global linear program over the full cross-section of options across all expiries completes essentially instantaneously.

\textsl{The data used for illustration in this article are end of day option data sourced from Option Metrics IvyDB.\footnote{\url{https://optionmetrics.com/}}. We selected OTM options with at least 100 lots are traded on the day. We also exclude strikes of options whose $\mathrm{Vega}/\sqrt{T}$ is less than 0.1\%. Past this, we also only consider expiries with
20~active options. }

\begin{figure}[H]
    \centering
    \includegraphics[width=0.9\linewidth]{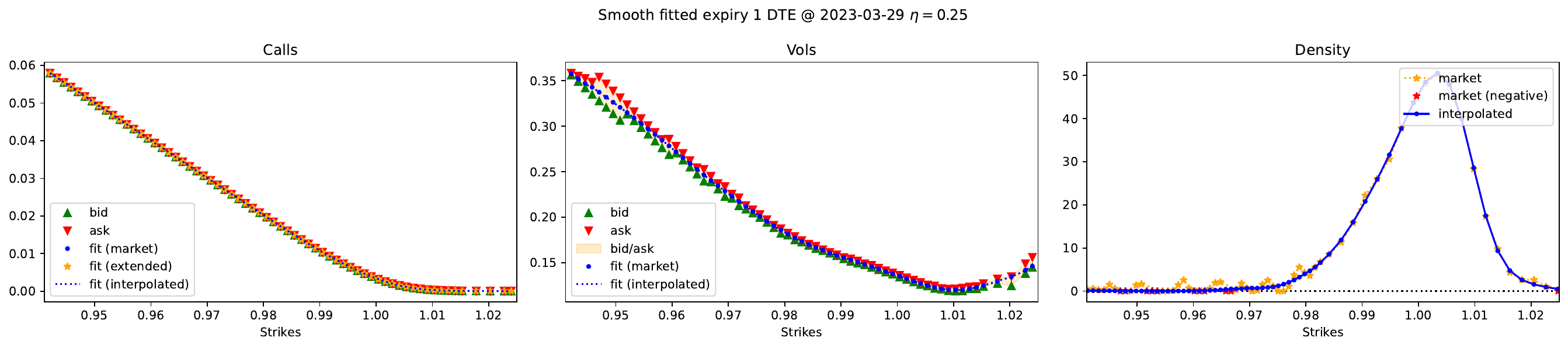}\\
    \includegraphics[width=0.9\linewidth]{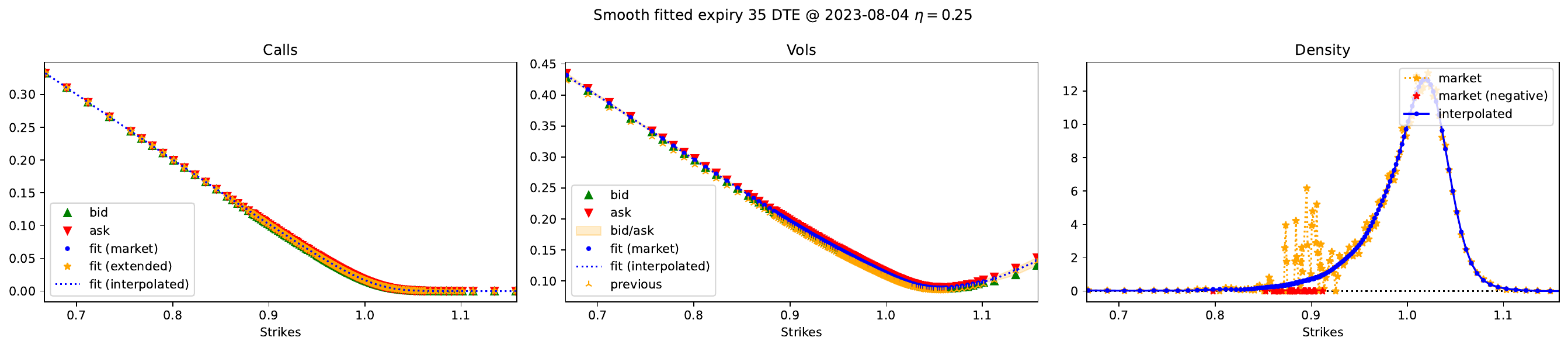}\\
    \includegraphics[width=0.9\linewidth]{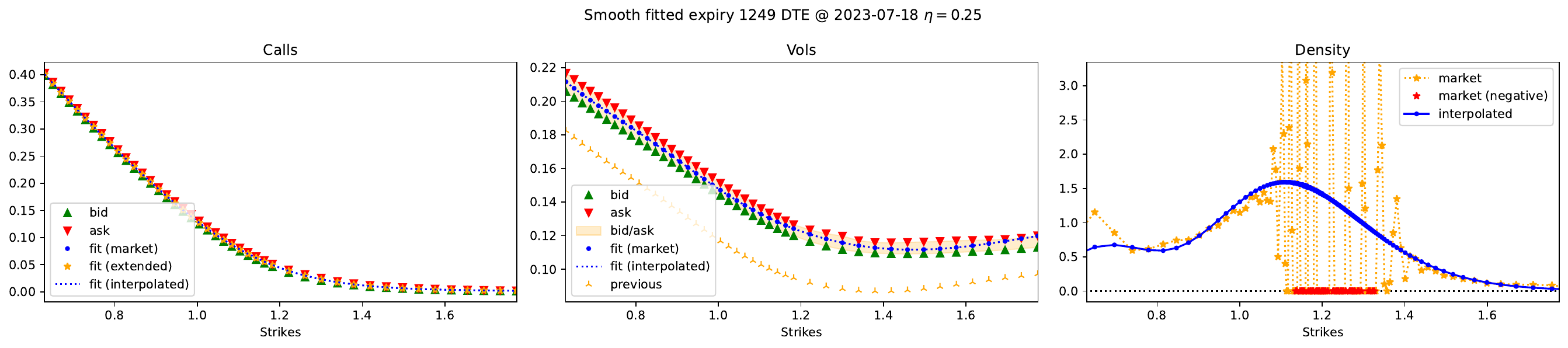}
    \caption{Fit of our model to the market, illustrated for a few DTE's on several days in 2023 using IvyDB OptionMetrics data. The thin bars in the call price and volatility
    graphs illustrate fit errors, expressed relative to the relevant spread on the right hand axis. (We do see some fitting errors in these charts because of the smoothness
    we impose; see section~\ref{sec:prod_model}.) The red dots below the zero line in the density
    graphs represent cases where the market-mid prices have arbitrage.
    }
    \label{fig:initialexample}
\end{figure}

A convenient feature of our model is that it is strictly arbitrage-free and does not require any of the compromises between fitting and avoiding arbitrage that other currently available approaches rely on: such as Lagrange multipliers, penalties, or reliance on ``observations in experiments''. Moreover,
its implementation is simple and fast enough to be used in high speed trading environments or
as part of machine learning models
trained using gradient descent or other heavy numerical methods.

Our proposed production variant boils down to writing
\begin{equation}\label{eq:intromodel}
    \hat C(T_j,K) := \sum_{i=0}^{N_j} q^i_j\, \mathrm{Call}\big(K^i_j,K,V_j \big) \ .
\end{equation}
in terms of martingale densities~$q_1,\ldots,q_M$ for expiries~$T_1<\cdots<T_M$
defined for strikes $K^1_j<\cdots<K^{N_j}_j$ and log-normal variances $0<V_1<\cdots<V_M$. Here, $\mathrm{Call}(s,k,v)$ is the Black-Scholes call price function 
for spot~$s$, strike~$k$ and variance~$v$. We find the densities~$q_1,\ldots,q_M$ to match the market
bid/ask using linear programming.

The method enforces absence of arbitrage in both time and strike space for the marginals~$\hat C(T_k,\cdot)$. 
In order to retain the linear structure of our approach with respect to~$q$ we postulate that the model price for an arbitrary expiry~$T\in (T_{j-1},T_j]$
and strike~$K$ can be written in terms of some increasing weight function $\alpha_j(T) \in [0,1]$
with $\alpha_j(T_{j-1})=0$ and $\alpha_j(T_j)=1$ as
\begin{equation}\label{eq:intro_full_C}
    \hat C(T, K) := \left\{
            \begin{array}{rl}
               {\color{pinegreen}\alpha_j(T)} & \sum_{i=1}^{N_j} {\color{pinegreen} q_j^i} \mathrm{Call}\!\left({\color{pinegreen} K^i_j },  K, {\color{pinegreen} V(T) } \right)  \\
        + {\color{pinegreen}(1 - \alpha_j(T))} & \sum_{i=1}^{N_{j-1}} {\color{pinegreen}q_{j-1}^i } \mathrm{Call}\!\left( 
        {\color{pinegreen} K^i_{j-1} }, K, {\color{pinegreen}V(T)} \right) 
        \end{array}
        \right.
\end{equation}
where we use the interpolated variance
$$
    V(T) := \alpha_j(T) V_j + (1-\alpha_j(T)) V_{j-1} \ .
$$
Note that this preserves smoothness in $K$ for all $T>0$.
An example for $\alpha_j(T)\in[0,1]$ is the linear
$$
    \alpha_j(T) := \frac{ ( T\wedge T_j - T_{j-1} )^+ }{ T_j - T_{j-1} } \ .
$$
In  practice a~$C^1$ monotone spline is preferable. (In the above $N_0=1$, $q_0^1:=1$ and $K^1_1:=1$.)

\subsubsection*{Fitting to Market Data}

Assume now our market has expiries $0=\tau_0<\cdots<\tau_m$ with market strikes $k^r_\ell$ for $\ell=1,\ldots,m$
and $r=1,\ldots,n_\ell$.
The model price~$\hat C(\tau_\ell, k^r_\ell)$ for each $k^r_\ell$ with $\tau_\ell\in (T_{j-1}, T_j]$ given in~\eqref{eq:intro_full_C} is linear in the variables~$q$, hence we can fit $q$ to market bid/ask prices by linear/convex programming provided our objective is linear/convex. This is discussed in 
section~\ref{sec:PROD} on implementation on page~\pageref{sec:PROD}.

Our model fits market data surprisingly well and is by construction strictly free of arbitrage and smooth. Our model has a simple interpretation as a discrete jump process multiplied
by an underlying martingale process.

In addition to our production model, we also present extensions to other underlying processes than Black-Scholes, and generalize our approach to a theoretically more
appealing model  which allows specification of different variance curves~$V^i_j$
per strike. While this model is theoretically more attractive its calibration is much heavier
and it is therefore perhaps of more academic interest.

\subsubsection*{Accessible Parametrization}

Considering the increased interest in generative models for option prices we also present a parametrization
of our model in terms of a parameter set which is subject only to positivity constraints. To this end we express the densities~$q_1,\ldots,q_M$
in our formula (which are subject to linear but still complicated constraints) with densities generated by ``Discrete Local Volatilities'' (DLVs) as introduced in~\cite{DLV} (which only need to be non-negative).
These DLVs together with the forward volatilities to generate the backbone variances only need be positive to generate an arbitrage-free smooth call option
price surface.

We note that a model in that space, combined with~\cite{removedrift, buehler2022deephedginglearningremove} will yield not only arbitrage-free  snapshots, but
also discrete time dynamics which are free of dynamic arbitrage. Such a model is subject to further research.

\subsubsection*{Implementation}

We provide a specific implementation strategy for both models in section~\ref{sec:implementation} .

\subsubsection*{Related Work}

There is an extensive body of work focusing on interpolating option implied volatilities with the aim of 
ensuring absence if strike and calendar arbitrage, starting as early as~\cite{estimatingimpliedMayhew}
and more recently
\cite{impliedvolsplinesBliss, fengler2009, deschâtres2026}. However,
none of the proposed models is able to ensure strict absence of arbitrage. 

Our approach is a generalization of the linear fit
to market prices first discussed in~\cite{DLV}.
This was more recently expanded in~\cite{cohen2020detecting, cohenreisingerwang2020}
to fit a linear arbitrage-free model between bid/ask prices. 
Both works expand on the similar, but much heavier
ideas of~\cite{buehler2006expensive} where a linear model is fitted to the market which also
reprices forward start options. The underlying linear model had -- in effect -- been already
been discussed in the seminal~\cite{Andersen1997TheEO, AndersenHugeDML},
albeit not in the context of fitting to the market, but
in the context of developing a consistent pricing operator once an arbitrage-free set of option prices is obtained.

In fact, linear interpolation already addresses many of the challenges of fitting a model to an arbitrary surface of observed option prices.
When applied to a dense grid it can be used to regularize local volatilities as discussed in~\cite{DLV}.
However, linear interpolation suffers from being ``the most expensive'' interpolation for call and put prices as
was pointed out in~\cite{buehler2006expensive}. This means that implied volatilities of linearly interpolated
option prices is too large as is illustrated in figure~\ref{fig:linexp}. This makes linear interpolation particularly
unsuitable if the option surface is used in automated trading applications such as Deep Hedging~\cite{DH} 
or other methods which use the prices of the surface to make trading decisions.
A~smooth interpolation scheme
avoids ``too expensive'' prices.

Our underlying model formulation is similar to~\cite{elalaoui2023}
who uses a time-homogeneous grid and assumes absence of arbitrage in the market prices
to construct the market density upon which he then attempts to train per-strike variances
via a non-linear least-squares optimisation. Using different variances per strike means that 
the resulting model no longer avoids term-structure arbitrage, a topic which we comment on below.

Seemingly similar are \textsl{Gaussian mixture models} which have been covered in the literature in the past, chiefly~\cite{mixedBrigo02}
and their earlier~\cite{mixedBrigo00}. Their primary focus is the construction of a local volatility
diffusion which mimics the marginal distributions generated by Gaussian mixture models.
The central difference, apart from the difference in focus, of these works to ours is that in \cite{mixedBrigo00,mixedBrigo02} the mixing is performed with fixed weights over time over a number of processes which all start at the same point; while our solution works by jumps from one underlying stock price to another:
The latter difference can be stated more explicitly, in our work as
    $$
        \hat C(T_j,K) := \sum_{i=0}^{N_j} q^i_j\, \mathrm{Call}\big({\color{pinegreen} K^i_j},K,V_j\big) 
    $$
where we used $S_0=1$ for spot and variances~$V_j$. Mixture models on the other hand implement
    $$
        \sum_{i=0}^{N_j} q^i_j\, \mathrm{Call}\big({\color{pinegreen} 1},K,V_j\big) \ . 
    $$
The effect is that we can impose much stronger skew via the weighting functions~$q_j$, at the cost
of needing to limit the weights~$q_j$ to densities with unit mean. As a result, we are
able to fit any grid of arbitrage-free bid/ask prices, while Gaussian mixture
models typically fail to do so.\\

We focus here on non-parametric models which aim to fit a market, rather than on (low-) parametric
representations of the option surface which express a view on the market
through the lens of intuitive and statistically meaningful parameters.
Usually such models exhibit arbitrage -- with the exception of the SSVI models
presented in the seminal~\cite{gatheral2014arbitrage}. This is a great tool to express a view on the market
through a low-parametric model lens.
However, as the
the authors point out, the most accessible parametrization
does not always fit well to observed option markets which means they need to resort to more involved
numerical methods to fit their low-parametric model to observed market data while retaining
strict absence of arbitrage.\\

\begin{table}[H]
    \centering
    \begin{tabular}{lcccc}
    \toprule
    Method
    & Smooth
    & Arbitrage-free
    & Parametric
    & Complexity \\
    \midrule
    Linear interpolation
    & $\times$
    & \checkmark
    & No
    & Linear \\
    SSVI
    & \checkmark
    & \checkmark
    & Yes
    & High \\
    This paper
    & \checkmark
    & \checkmark
    & No
    & Linear \\
    \bottomrule
    \end{tabular}
    \caption{Comparison of option surface construction methods.}
    \label{tab:placeholder}
\end{table}

\subsubsection*{Classic Calibration Models}

 We omitted in this brief summary all methods which rely on calibrating a heavier dynamic 
 model such as Heston~\cite{heston1993closed}, Bates~\cite{NBERw4596}, or various Levy models e.g.~\cite{carr2017local, CGMY2002}.
 In each of these
 cases model parameters are trained by minimizing the distance between market and model prices using numerical methods.
 The models used for this purpose in practise tend to be affine, in which case the Fast-Fourier pricing methods
 pioneered in~\cite{carrmadan1999} can be applied during calibration.  
 
 Further advances were made in~\cite{TangentLevy} where the authors show how to fit Levy models to an already arbitrage-free
 market. 
 Such methods are very promising
 but do not yet achieve the same quality of fit with the same speed as SANOS. 
 
\subsubsection*{American Options}
    In this article we focus on European options which are quoted for futures and indices. Options on single underlyings
    and ETFs, in contrast, are typically American options. 
    This is subject to further research.

\subsubsection*{Structure of the Article}

We start in Section~\ref{sec:review} with a review of the concept of absence of arbitrage and its relation to linear interpolation option prices. This section also serves
to settle notation and ideas.
Section~\ref{sec:smootharbfree} then covers the main two proposed call price surface models where we also present some illustrative numerical results. We also reflect on generalization
and representation as dynamic processes. Section~\ref{sec:implementation} covers practical implementation.

\section{Arbitrage-Free Call Prices}\label{sec:review}
We start by fixing some notions and conventions that we will frequently need throughout the paper. 
For this we first briefly recall  the Black--Scholes formula in the form we will frequently use in the paper:  With $s>0$ denoting spot,  $k\geq 0$ strike and $v\ge 0$ (total) variance\footnote{Many standard settings
denote variance $v$ as $v=\sigma^2 T$ with maturity $T\ge 0$ and volatility $\sigma\ge 0$.
}, the formula for European call options is given in terms of the
standard normal distribution function $\mathcal{N}$ as
\begin{equation}\label{eq:bsc}
    \mathrm{Call}(s,k,v) := s\,\mathcal{N}(d_+) - k\,\mathcal{N}(d_-) \qquad\text{with}\qquad
    d_\pm = \frac{-\ln(k/s)\pm \frac12 v}{\sqrt{v}} \, .
\end{equation}
 In this article we discuss construction of smooth arbitrage-free European option price surfaces for an equity denoted by~$S$. The following assumptions hold throughout the paper unless otherwise stated.
We will assume that the dynamics of the equity $S$ can be written in the form 
$S_T = F_T Z_T$ where~$F$ is the forward curve\footnote{
    The assumption $S_T=F_T Z_T$ implies that dividends are proportional; alternative cash dividend treatments and 
handling of simple credit risk which still allow conversion back to ``pure'' option prices are covered in~\cite{BuehlerVolAndDivs}.
} 
and where~$Z$ is a martingale with unit expectation. We also assume that the equity does not
default and does not otherwise reach zero\footnote{
We therefore exclude models which can diffuse into zero such as the CEV model given as $dS_t=\sqrt{S_t}\,dW_t$.
}. 
The deterministic discount factors are given by $\mathrm{DF}_T$. The key point is that in this setting 
if $\mathcal{C}(T,\mathcal{K})\equiv \mathrm{DF}_T\, \E[ (S_T-\mathcal{K})_+ ]$ is a surface of call prices with expiries $T$ and cash strikes $\mathcal{K}$, then we can convert 
these into \textit{``pure''} call prices on the martingale $Z$ by setting $\E[ (Z_T-K)_+] \equiv C(T,K) := \mathcal{C}(T,F_T K)/(\mathrm{DF}_T F_T)$.

A~similar transformation is applied to obtain pure put prices, which we then convert
into pure call prices by put-call-parity. 
With these preparations, we can henceforth focus or attention on ``pure'' call prices, i.e.~those written on the underlying martingale process:\\

\begin{definition}[Arbitrage Free Call Price Surfaces]\label{def:arbfreecall}

We call a function $C:[0,\infty)^2 \rightarrow [0,\infty)$ an \emph{arbitrage-free call price function} (or \emph{``surface''}) if 
    there is a probability measure $\P$  and a positive $\P$-martingale $(Z_T)_{T\in [0,\infty)}$
    with $Z_0=1$ a.s.,
    such that\footnote{This is an application of
    the ``First Fundamental Theorem of Asset Pricing''~\cite{delbaen2006general}
    which posits the equivalence to absence of arbitrage if and only if 
    a martingale~$Z$ exists.}
    $$
        C(T,K) = \E_\P\left[ (Z_T-K)^+ \right] \ .
    $$
    In this case we call~$Z$ the \emph{martingale representation} of~$C$.
\end{definition}

\noindent
We will by convention consider derivatives of call prices in $K$ as
right hand derivatives%
\footnote{
    i.e.~$\partial_K C(T,K) := \partial^+_K C(T,K) :=
    \lim_{\epsilon\downarrow 0} \frac1\epsilon ( C(T,K+\epsilon)-C(T,K) )$;
    see the proof of theorem~\ref{th:contarbfree} in the appendix on 
    page~\pageref{sec:appendix_absence} why this is the natural definition
    when working with distributions.}.
Next, we state a theorem that formulates a list of shape requirements on a surface, which guarantee to yield an arbitrage free call price surface in the sense of definition \ref{def:arbfreecall}.
\begin{theorem}\label{th:contarbfree}
    A~call price surface $C:[0,\infty)^2 \rightarrow [0,\infty)$ \emph{arbitrage-free} if and only if
    \begin{enumerate}
        \item The market has unit expectation: $C(T,0)\equiv \E[Z_T]=1$ for all $T$;\label{it:caf_C0_1} 
        
        \item zero is unattainable (i.e.~no atom at $0$): $\partial_K C(T,0)\equiv -1$;\label{it:caf_C1m1}
        \item call prices ultimately reach zero: $\lim_{K\uparrow \infty} C(T,K)=0$,\label{it:caf_limC}
        
        \item call prices are convex: $\partial_{KK} C(T,K) \geq 0$; and\label{it:caf_cvx}
        
        \item call prices are increasing in time: $\partial_T C(T,K)\geq 0$.\label{it:caf_dt}
    \end{enumerate}
    As a consequence we also have:
    \begin{enumerate}\setcounter{enumi}{5}
        \item $C(T,K) \geq (1-K)_+$ for any $K,T$; as a direct consequence of 1. \& 5. and
        \item $\partial_K C(T,K) \in [-1,0)$
        for any $K,T$.
        \label{it:derivpos}
    \end{enumerate}
\end{theorem}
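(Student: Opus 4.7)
The plan is to split the biconditional into necessity (direct computation from the martingale representation) and sufficiency (reconstruction of a martingale from the shape conditions), and then derive the two corollaries from what has been proved. I will treat $C(T,K)$ via right-derivatives in $K$ throughout, so all calculus manipulations are justified by the convexity in condition~\ref{it:caf_cvx}.

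For the necessity direction, I would start from $C(T,K)=\E[(Z_T-K)^+]$ and verify each item in order. Condition~\ref{it:caf_C0_1} is immediate from the martingale property $\E[Z_T]=Z_0=1$. The standard identity $\partial_K C(T,K)=-\P[Z_T>K]$ (obtained from the distributional derivative of $(z-K)^+$, or from Fubini on the tail formula) yields condition~\ref{it:caf_C1m1} since $Z_T>0$ almost surely, and places $\partial_K C(T,K)\in[-1,0]$ for all $K$, giving the non-strict form of consequence~\ref{it:derivpos}. Condition~\ref{it:caf_limC} follows from dominated convergence: $(Z_T-K)^+\downarrow 0$ pointwise and is bounded by the integrable $Z_T$. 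Convexity in~\ref{it:caf_cvx} is inherited from the pointwise convexity of $K\mapsto(z-K)^+$. Condition~\ref{it:caf_dt}, the calendar-spread monotonicity, uses the tower property and conditional Jensen's inequality:
$$
\E\bigl[(Z_{T_2}-K)^+\bigr]=\E\!\left[\E\bigl[(Z_{T_2}-K)^+\mid\Ff_{T_1}\bigr]\right]\geq \E\!\left[\bigl(\E[Z_{T_2}\mid\Ff_{T_1}]-K\bigr)^+\right]=\E\bigl[(Z_{T_1}-K)^+\bigr].
$$

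For the sufficiency direction I would construct the martingale in two stages. First, for each fixed $T$, define $\mu_T$ as the (positive, by~\ref{it:caf_cvx}) Radon measure on $[0,\infty)$ given by the distributional second derivative $\partial_{KK}C(T,\cdot)$. Using~\ref{it:caf_C1m1} and the fact that $\partial_K C(T,K)\to 0$ as $K\to\infty$ (a consequence of~\ref{it:caf_limC} combined with convexity and monotonicity of $\partial_K C$), integration by parts gives $\mu_T([0,\infty))=-\partial_K C(T,0)-\lim_{K\to\infty}\partial_K C(T,K)=1$, while a second integration by parts against $K$ identifies its mean as $C(T,0)=1$; condition~\ref{it:caf_C1m1} additionally rules out an atom at the origin, so $\mu_T$ is supported on $(0,\infty)$. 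Moreover one recovers $C(T,K)=\int(z-K)^+\mu_T(dz)$. Second, condition~\ref{it:caf_dt} says that $T\mapsto\int(z-K)^+\mu_T(dz)$ is non-decreasing for every $K$, which is precisely the statement that $(\mu_T)_T$ is non-decreasing in the convex order.

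The main obstacle is now packaged cleanly and I would dispatch it by invoking Kellerer's theorem: any family of probability measures with common mean that is non-decreasing in the convex order admits a martingale with those one-dimensional marginals, which can be chosen Markovian; positivity of each $\mu_T$ yields a non-negative martingale, and the absence of an atom at $0$ in each marginal gives strict positivity of $Z_T$ for every $T$. With this $Z$ in hand, $C(T,K)=\E[(Z_T-K)^+]$ by construction. I expect the delicacy to be not the algebra but justifying the distributional integration by parts under only the stated regularity and handling the limit $\partial_K C(T,K)\to 0$ cleanly; both are standard once one works with right-continuous, non-decreasing versions of $\partial_K C(T,\cdot)$. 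Finally, consequence~\ref{it:derivpos} (non-strict) has already been recorded above from $\partial_K C(T,K)=-\P[Z_T>K]\in[-1,0]$, and consequence~6 is Jensen: $\E[(Z_T-K)^+]\geq(\E[Z_T]-K)^+=(1-K)^+$.
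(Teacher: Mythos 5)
Your sufficiency argument follows essentially the paper's route: extract, for each expiry, a probability measure $\mu_T$ from the distributional second derivative of $C(T,\cdot)$, show it has unit mean and no atom at the origin, check that the family is non-decreasing in convex order, and then invoke Kellerer's theorem. The necessity direction and the two consequences are handled by standard identities, also matching the paper (the paper farms out the per-expiry measure construction to Proposition~2.3 of \cite{buehler2006expensive}, but the content is the same as your integration-by-parts argument). The one step you assert without justification is the passage from ``$\int (z-K)^+\,\mu_T(dz)$ non-decreasing in $T$ for every $K$'' to ``$(\mu_T)_T$ is non-decreasing in the convex order.'' That equivalence holds only because the $\mu_T$ share a common (unit) mean and requires an argument: the paper supplies it by representing an arbitrary convex test function as $f(x)=f(0)+f'(0)x+\int_0^\infty (x-K)^+ f''(dK)$ with $f''\geq 0$ and taking expectations, so that monotone call prices plus equal first moments propagate to all convex $f$. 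This is a well-known lemma, so the omission is minor rather than a flaw, but it is precisely the nontrivial link the paper's proof takes care to establish.
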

\begin{remark}
Several of the above conditions are well-known in the context of related literature. 
In several related settings, strict absence of arbitrage statements are not always fully complete: the boundary conditions~\textit{\ref{it:caf_C0_1}.} and~\textit{\ref{it:caf_limC}.}~are frequently omitted, and/or negativity of the derivative~\textit{\ref{it:derivpos}.} is not required.

Our set of conditions for theorem \ref{th:contarbfree} are minimal equivalent conditions for the
existence of a positive true martingale in sense of definition~\ref{def:arbfreecall}.
A~full proof is delegated to the appendix page~\pageref{sec:appendix_absence}ff. 
\end{remark}

\begin{remark}
To ensure positivity, we have imposed condition~\textit{\ref{it:caf_C1m1}.}
rather than the commonly used  weaker condition $\partial_K C(T,0)\geq -1$.\footnote{
            For fixed $T$ and $K>0$, 
            we have, $ \partial_K C(T,K)
                =\E\!\left[\partial_K(X_T-K)^+\right]
                =-\P(X_T>K).$
            Taking the right-limit as $K\downarrow 0$ yields $\partial_K C(T,0)
                =-\P(X_T>0)
                =-(1-\P(X_T=0)).$}
 This requirement is justified as we consider an asset reaching  zero
outside a~default event undesirable, and point to~\cite{BuehlerVolAndDivs} on how to incorporate ``sudden'' default risk in our framework.
\end{remark}

\begin{definition}[Smooth call price function]
    We call a call price function \emph{smooth} (in strike direction) if $\partial_{KK} C(T,K) > 0$ for all $K>0$.
\end{definition}

\subsection{Discrete Martingales}

In this section we recall results from~\cite{DLV} which we also use to
fix notation and ideas for the remainder of the paper.  The purpose of this section is to anchor our
model in its non-smooth limit of linear interpolation between model prices.

Assume therefore for now we are given expiries $0=T_0<T_1<\cdots<T_M$ each with $N_j$ strikes
 $0=K^0_j<K^1_j<\cdots<1<\cdots<K^{N_j}_j$ and call prices $C^0_j,\ldots,C^{N_j}_j$
where $C_j^i$ denotes the price of the call expiring in $T_j$ with strike $K^i_j$. We also assume w.l.g.~(see remark~\ref{rem:extendstrikes}) that the boundary strikes $K^1_j\equiv K^{\min}$ and $K^{N_j}_j\equiv K^{\max}$ are
all the same.
Define for $i=0, \ldots, N_j-1$
\begin{equation}\label{def:dC}
    dC^i_j := \frac{ C^{i+1}_j - C_j^i }{ K^{i+1}_j - K^i_j }  
\end{equation}
and set $dC_j^{N_j}:=0$. We wish to understand when this set of market data is ``arbitrage-free'':
\begin{theorem}\label{th:noarb}
    A market of call prices $C_j^i$ is \emph{arbitrage-free} in the sense that 
    there exists a positive (discrete state) martingale~$X$ with unit expectation which satisfies
    $C^i_j=\E_\P[(X_{T_j}-K^i_j)^+]$
    if
    \begin{enumerate}
    		\item the market has unit expectation: $C_j^{0}=1$;
    		\item zero is unattainable: $dC_j^{0}=-1$ which with the above and \eqref{def:dC} implies $C^1_j=1-K^{\min}$;\footnote{
Equation \eqref{def:dC} (for $i=0$), and  condition $dC^0_j=-1$ gives
$
    (C^{1}_j - C^{0}_j)/(K^{1}_j-K_j^{0})=-1
$,
hence 
$C^{1}_j = C^{0}_j - (K_j^{1}-K_j^{0}).$
Using  $C^{0}_j=1$ (condition~1.), and the fact that $K_j^{0}=0$, gives
$C^{1}_j = 1 - (K_j^{1}-0)=1-K_j^{1}.$
}
        \item call prices ultimately reach zero: $C_j^{N_j}=0$;
        \item call prices are convex: $ dC^i_j \leq dC^{i+1}_j$; and\label{item:distimedc}
        \item call prices are increasing in time when linearly interpolated in the following sense:
          for $K^i_{j+1} \in [ K_j^\ell, K_j^{\ell+1})$ we have
          \begin{equation}\label{eq:arbfreeinT}              
                 C_j^\ell \frac{K_j^{\ell+1} - K_{j+1}^i}{K_j^{\ell+1} - K_j^\ell}
                 +
                 C_j^{\ell+1} \frac{K_{j+1}^i - K_j^{\ell} }{K_j^{\ell+1} - K_j^\ell}                 
                 \leq
                 C_{j+1}^i \ .
          \end{equation}
          If the strikes are homogeneous across expiries this reduces to
          the simpler
          $$
            dC^i_j \leq dC^i_{j+1} \ .
          $$

    \end{enumerate}
    As a consequence we also have
    \begin{enumerate}\setcounter{enumi}{5}
        \item $C_j^i \geq (1-K^i_j)^+$ and
        \item $dC_j^i\in [-1,0]$.
    \end{enumerate}
\end{theorem}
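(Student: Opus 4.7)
The plan is to argue the two directions separately by leveraging Theorem~\ref{th:contarbfree} so as to avoid reproving martingale existence from scratch.

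For ($\Rightarrow$), each of items 1--5 reads off directly from the martingale representation $C^i_j=\E[(Z_{T_j}-K^i)^+]$: item~1 is $\E[Z_{T_j}]=1$; item~3 is $\E[(Z_{T_j}-K^N)^+]=0$, equivalently $Z_{T_j}\le K^N$ almost surely; item~2 combines item~1 with positivity of $Z$ to force $C^1_j=1-K^1$, i.e.\ the first chord slope equals $-1$; item~4 is the convexity of $K\mapsto(Z_{T_j}-K)^+$ read off along chord slopes; and item~5 is Jensen applied to the convex payoff together with the martingale property of~$Z$.

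For ($\Leftarrow$), the main step, I would construct a piecewise-linear extension $\widetilde C$ of the discrete data and invoke Theorem~\ref{th:contarbfree}. At each quoted expiry $T_j$, define $\widetilde C(T_j,K)$ as the piecewise-linear interpolant between consecutive grid strikes (and $0$ for $K\ge K^N$); between consecutive quoted expiries, interpolate linearly in $T$. The five continuous conditions of Theorem~\ref{th:contarbfree} transfer from their discrete counterparts: the boundary value $\widetilde C(T,0)=1$ and right-slope $\partial_K\widetilde C(T,0)=-1$ from items 1--2; the vanishing limit at infinity from item 3; $K$-convexity from item 4, preserved under time-convex combinations on a common $K$-grid; and $T$-monotonicity from item 5, which holds at the grid by assumption and extends between grid strikes by linearity on a common partition. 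Theorem~\ref{th:contarbfree} then produces a positive martingale $Z$ with $\widetilde C(T,K)=\E[(Z_T-K)^+]$, and restriction to the grid yields the required representation.

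Consequences 6 and 7 are elementary: convexity (item~4) together with $dC^0_j=-1$ gives $dC^i_j\ge -1$ for all $i$, whence $C^i_j\ge 1-K^i$ by telescoping; combined with non-negativity of $C$ this delivers $C^i_j\ge (1-K^i)^+$. For the upper bound, the rightmost slope satisfies $dC^{N-1}_j=-C^{N-1}_j/(K^N-K^{N-1})\le 0$, and convexity propagates $dC^i_j\le 0$ to all $i$. The principal obstacle in the whole argument lies in the sufficiency direction: one must check that linear-in-$T$ interpolation of two piecewise-linear-in-$K$ functions on a common $K$-grid preserves both $K$-convexity and $T$-monotonicity globally, which holds because the interpolant remains piecewise linear on that common grid with chord slopes given by the same convex combination of the (already ordered) endpoint chord slopes.
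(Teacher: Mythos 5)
Your argument is correct in outline but takes a genuinely different route from the paper. The paper outsources the proof to~\cite{DLV}, which \emph{constructs} the discrete-state martingale explicitly via transition operators (in the spirit of Theorem~\ref{th:dlvtrs}); that route is longer but supplies the operators that are reused elsewhere. You instead reduce the discrete statement to the already-proved continuous Theorem~\ref{th:contarbfree} by embedding the grid data into the piecewise-linear surface $\bar C$ of Section~\ref{sec:lin_surface} and appealing to Kellerer. This is economical, but it yields existence only. If you go this route, you must still justify why the Kellerer martingale is in fact \emph{discrete state}: the point is that $\partial_{KK}\bar C_j(\cdot)$ is a measure supported on the kinks $\{K^1,\ldots,K^N\}$, so the marginals constructed in Proposition~\ref{prop:arbfreemart} are atomic there, and hence so is $Z_{T_j}$. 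Without that observation your $Z$ is just \emph{some} positive martingale, which is weaker than what the theorem claims.

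There is also a slip in your necessity direction for item~2. Positivity of $Z$ does not by itself force $C^1_j=1-K^1$: in general
\[
  C^1_j=\E\bigl[(Z_{T_j}-K^1)^+\bigr]\;\ge\;\E[Z_{T_j}]-K^1\;=\;1-K^1,
\]
with equality if and only if $Z_{T_j}\ge K^1$ almost surely. That almost-sure lower bound (and the matching upper bound $Z_{T_j}\le K^N$ needed for item~3) come from the ``(discrete state)'' qualifier — i.e.\ from $Z_{T_j}$ being supported on $\{K^1,\ldots,K^N\}$ — not from positivity alone. This is exactly the point the paper echoes by noting, via~\cite{DLV}, that the constructed martingale has mass only in $K^1,\ldots,K^N$. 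Your write-up should make this reliance on the support assumption explicit, since for a generic positive martingale items~2 and~3 would simply be false.
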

\noindent
\begin{proof}
See~\cite{DLV} theorems~3.1 and~3.2 and subsequent statements there for a constructive proof of the existence of a martingale $X$ with these marginal distributions in the  general case of non-homogeneous grids.
 It is also shown there that the corresponding martingale~$X$ only has mass in $K^1_j,\ldots,K^{N_j}_j$.
\end{proof}

\begin{remark}\label{rem:extendstrikes}
A common practice\footnote{
        A~viable strategy is as follows: first, to find~$K^{\max}$ let $K^*_j$ be the strike where the intrinsic value intersects with the line from $(K^3_j,C^3_j)$ through $(K^2_j,C^2_j)$, 
        i.e.~it satisfies $C^2_j+dC^2_j (K^*_j-K^3_j) = 1-K^*_j$.
         It is therefore the biggest strike we can insert with a call price equal to
        intrinsic value such that call prices remain convex. Then set $K^{\min} := 1/10 \min_j K^*_j$.
        
        To determine $K^{\max}$ find per expiry the strike $K^\#_j$ where the line from $(K^{N_j-2},C^{N_j-2}_j)$ through $(K^{N_j-1},C^{N_j-1}_j)$ intersects with zero, i.e.~$C^{N_j-1}_j + dC^{N_j-2}_j ( K^\#_j - K^{N_j-1}_j) = 0$.
        Then define $K^{\max}:=1.5 \max_j K^\#_j$.
} when working with real market data to satisfy the conditions on the strikes
in the previous theorem is to first consider strikes $0\ll K^2_j<\cdots<1<\cdots<K_j^{N_j-1}$ with calls $C^i_j > (1-K_j^i)^+$ and $1>C^2_j>\cdots > C^{N_j-1}_j>0$
which satisfy $-1 < dC^2_j \leq \cdots \leq dC_j^{N_j-1} < 0$,
and then to insert the missing data as follows:
\begin{enumerate}
    \item $K^0_j:=0$ with $C^0_j:=1$;
    \item $K^1_j := K^{\min} \ll \min_j K^2_j$ with $C^1_j:=1-K^{\min}$ such that $-1=dC^0_j < dC^1_j < dC^2_j$ for all $j$;
    \item $K^N_j :- K^{\max} \gg \max_j K^{N_j-1}_j$ with $C_j^{N_j}=0$ such that $dC^{N_j-2}_j < dC_j^{N_j-1} < 0$ for all $j$.
\end{enumerate}
\end{remark}

Given a discrete state martingale~$X$ as per above, let us denote with slight abuse
of notation by $X_j:=X_{T_j}$
the corresponding discrete state and discrete time martingale Markov chain.
Given strikes $K_j=(K^1_j,\ldots,K^{N_j}_j)'$ and a Markov chain $(X_j)_{j=0,\ldots,M}$ with
$\P(X_j=K^i_j)=:p_j^i$, define the transition probabilities
\begin{equation}\label{eq:transX}
P_{j|j-1}^{i|\ell}:=\P\!\left[X_j=K^i_j \mid X_{j-1}=K^\ell_{j-1}\right], \qquad i=1,\ldots,N_j;\ \ell=1,\ldots,N_{j-1}.
\end{equation}
Let $P_{j|j-1}\in\R^{N_j\times N_{j-1}}$ denote the matrix with entries $(P_{j|j-1})_{i,\ell}=P_{j|j-1}^{i|\ell}$.
(Thus columns correspond to the conditioning state $\ell$.)
\begin{definition}[Martingale transition operator]\label{def:transiop}
A matrix $P\in\R^{N_j\times N_{j-1}}$ is called a \emph{(discrete) martingale transition operator}
with respect to strikes $K_j$ and $K_{j-1}$ if
\begin{enumerate}
    \item it is column-stochastic: $P\ge 0$ and $1' \cdot P=1'$;
    \item it satisfies the martingale property: $K_j'\cdot P=K_{j-1}'$,
\end{enumerate}
where we used ``$\cdot$'' to denote
matrix/vector multiplications. Equivalently, for each conditioning index $\ell\in\{1,\ldots,N\}$, the above properties translate into 
\[
    \sum_{i=1}^{N_j} P^{i|\ell}=1,
    \qquad\text{and}\qquad
    \sum_{i=1}^{N_j} K^{i}_j\,P^{i|\ell}=K^{\ell}_{j-1}.
\]
\end{definition}
\begin{definition}[Martingale Density]\label{def:martdens}
    A~sequence $p=(p_j)_{j=0,\ldots,M}$ of densities $p_j\in \R^{N}_{\geq 0}$ each defined for expiries $0=T_0<\cdots<T_M$ over strikes $K_j$ with unit mean is called a \emph{martingale density}
    if there exist\footnote{
         As pointed out in~\cite{buehler2006expensive} discrete time transition operators
    are by no means unique for given marginal distributions.    
    } martingale transition operators $P_{j|j-1}\in\R^{N\times N}$  such that
    $$
        p_j = P_{j|j-1} p_{j-1} 
    $$
    written out as
    $$
        p_j^{i_j} = \sum_{i_{j-1}=1}^{N_{j-1}} P_{j|j-1}^{i_j|i_{j-1}} p_{j-1}^{i_{j-1}} \ .
    $$     
    (The initial density $p_0=(1)$ is trivially defined for $N_0=1$ and strike $K_0^1:=1$).
\end{definition}

\begin{remark}
    The martingale density~$p$ for an arbitrage-free market of call prices $C$ is given as
    \begin{equation}\label{def:p}
        p_j^i := dC_j^i - dC_j^{i-1} \geq 0 \ .
    \end{equation}
    for $i=1,\ldots,N_j$
    with $dC^i_j$ defined in~\eqref{def:dC}. We also let $p_0^1:=1$. (Theorem~\ref{th:dlvtrs} on page~\pageref{th:dlvtrs} shows how to construct respective transition operators.   
\end{remark}

\subsection{Linear Arbitrage-Free Call Price Surfaces}\label{sec:lin_surface}

We now recall the well-known result that linear interpolation in strike and expiry yields an arbitrage-free -- but not smooth -- interpolator, to motivate our later results.

\begin{definition}[Linear Interpolation]\label{th:linintp}
The linear interpolator in strikes $K$ for expiry $T_j$ is
$$
    \bar C_j(K) := \sum_{i=0}^{N_j-1} \Big\{  dC^i_j ( \underbrace{ K-K_j^{i+1}) }_{\leq 0} + C^{i+1}_j \Big\}_{K_j^i\leq K<K_j^{i+1}} \ .
$$
with the trivial $\bar C_0(K):=(1-K)^+$. To interpolate in time without causing
calendar arbitrage,
any increasing interpolation along fixed strikes will be sufficient. Assume 
therefore that ~$\alpha_j:(T_{j-1},T_j]\rightarrow [0,1]$ 
is increasing with
with $\alpha_j(T_{j-1})=0$ and $\alpha_j(T_j)=1$. Then let
\begin{equation}\label{eq:intT}
    \bar C(T,K) := \bar C_j(K) \alpha_j(T) + (1-\alpha_j(T) ) \bar C_{j-1}(K) \ .
\end{equation}
\end{definition}   

 \begin{remark}
    A~naive choice for the time-interpolation functions~$\alpha_j$ is
    linear interpolation
    \begin{equation}\label{eq:linintp}
         \alpha_j(T):= \frac{ T\wedge T_j - T_{j-1} }{ T_j - T_{j-1} } \ .
    \end{equation}
    A common alternative is to interpolate smoothly in implied Black-Scholes ATM variance.
    To this end let $W_j$ be the Black-Scholes implied variance for $C_j(1)$. 
    Define an increasing interpolation function $W(T)$ with $W(T_j)=W_j$
    for example an interpolating monotone $C^1$ spline\footnote{Monotone splines are implemented in 
    scipy's \texttt{PchipInterpolator}.} Then set
    $$
    \alpha_j(T) := \frac{ \mathrm{Call}\big(1, 1, W(T)\big) - C_{j-1}(1) }
    { C_j(1) - C_{j-1}(1) } \1_{T\in(T_{j-1},T_j]} \ .
    $$
    
\end{remark}

\noindent
For simplicity we will continue refer to $\bar C$ as ``linear interpolation'' even if~$\alpha$ may not be linear.

\begin{proposition}\label{prop:linintp}
    It holds that 
    \begin{equation}
        \bar C_j(K) =
             \sum_{i=1}^{N} p_j^i\, (K^i_j-K)^+ \label{eq:hatCputs} \ .
    \end{equation}
    with the martingale density~$p$ as
    defined in~\eqref{def:p}.
\end{proposition}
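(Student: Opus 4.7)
The plan is to verify the identity by showing that both sides are continuous, piecewise-linear functions of $K$ with the same kink set, matching slopes on each linear piece, and agreeing at a single anchor point. From Definition~\ref{th:linintp}, $\bar C_j(K)$ is piecewise linear on each interval $[K^\ell,K^{\ell+1})$ with kinks possibly at the grid strikes $K^0,\ldots,K^N$; it is continuous across each internal $K^\ell$ because, by the definition~\eqref{def:dC} of $dC_j^\ell$, the piece to the left evaluates at $K^\ell$ to $C_j^\ell$, and the piece to the right evaluates at $K^\ell$ to $dC_j^\ell(K^\ell-K^{\ell+1})+C_j^{\ell+1} = -(C_j^{\ell+1}-C_j^\ell)+C_j^{\ell+1}=C_j^\ell$. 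The RHS is a finite nonnegative linear combination of put payoffs $(K^i-K)^+$, hence also continuous and piecewise linear, with kinks only at $K^1,\ldots,K^N$.

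Next, I would anchor both sides at $K=K^N$. The RHS vanishes there since $(K^i-K^N)^+=0$ for every $i\le N$; the LHS vanishes by the terminal condition $C_j^N=0$ together with continuity established above.

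The main step is matching slopes on a generic piece $K\in(K^\ell,K^{\ell+1})$. By construction, $\bar C_j$ has slope $dC_j^\ell$ on this piece. For the RHS, the right derivative of $(K^i-K)^+$ equals $-\mathbf{1}_{\{K<K^i\}}$, so
$$
    \partial_K \sum_{i=1}^N p_j^i (K^i-K)^+
    \;=\; -\sum_{i=\ell+1}^N p_j^i
    \;=\; -\sum_{i=\ell+1}^N \bigl(dC_j^i-dC_j^{i-1}\bigr)
    \;=\; -\bigl(dC_j^N-dC_j^\ell\bigr)
    \;=\; dC_j^\ell,
$$
invoking the telescoping form~\eqref{def:p} and the boundary convention $dC_j^N=0$. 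Two continuous piecewise-linear functions with identical kink sets, matching slopes on every piece, and agreeing at one point must coincide on $[0,K^N]$; beyond $K^N$ both vanish, giving the claim on all of $[0,\infty)$.

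No step poses a genuine obstacle: the only care required is tracking the telescoping sum in the slope computation and verifying continuity of $\bar C_j$ at the interior grid points, both of which are routine. An equivalent route via Abel summation on $\sum_{i=\ell+1}^N(dC_j^i-dC_j^{i-1})(K^i-K)$, combined with $dC_j^i(K^{i+1}-K^i)=C_j^{i+1}-C_j^i$ and the boundary conditions $dC_j^N=0$, $C_j^N=0$, yields the same identity directly.
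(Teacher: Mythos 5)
Your proof is correct, and it takes a genuinely different route from the paper's. The paper proves the identity by direct algebraic manipulation: it first rewrites $C_j^{i+1}$ as a telescoping sum $\sum_{\ell=i+1}^{N-1} dC_j^\ell\,(K^\ell-K^{\ell+1})$, substitutes this into the definition of $\bar C_j$ to obtain $\bar C_j(K)=\sum_{i=0}^{N-1} dC_j^i\bigl((K^i-K)^+-(K^{i+1}-K)^+\bigr)$, and then applies Abel summation (together with the conventions $dC_j^0=-1$, $dC_j^N=0$) to collapse this to $\sum_i p_j^i\,(K^i-K)^+$. You instead characterize both sides as continuous piecewise-linear functions, match their right-derivatives $dC_j^\ell$ on each cell $(K^\ell,K^{\ell+1})$ by telescoping the density increments, and anchor the two functions at $K^N$ where both vanish. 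You even note at the end that the Abel-summation route---essentially what the paper does---would also work. Your ``slopes plus one anchor'' argument is arguably more transparent and makes the geometric content evident, whereas the paper's direct manipulation has the advantage of producing the intermediate identity $\bar C_j(K)=\sum_i dC_j^i\bigl((K^i-K)^+-(K^{i+1}-K)^+\bigr)$, which is re-used verbatim in the very next proposition (to compute $\bar C_j(0)=1$). One minor remark: you note the RHS is a \emph{nonnegative} combination of put payoffs, but nonnegativity of $p_j^i$ plays no role in the identity itself (it matters only for the arbitrage-free interpretation), so that phrase could be dropped without loss.
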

\begin{proof}
We notice that
$$
    C^{i+1}_j = 
    \left( C^{i+1}_j - C^{i+2}_j \right) + \left( C^{i+2}_j - C^{i+3}_j \right) + \cdots + \left( dC_j^{N-1} - dC_j^{N}\right)
    =   
    \sum_{\ell=i+1}^{N-1} dC^\ell_j ( \underbrace{K_j^\ell - K_j^{\ell+1}}_{\leq 0} )
$$
hence
\begin{eqnarray}
    \bar C_j(K) & = & \nonumber
    \sum_{i=0}^{N_j-1} \1_{K_j^i\leq K<K_j^{i+1}} \left\{
    dC^i_j (  \underbrace{ K-K_j^{i+1} }_{\leq 0})  +
        \sum_{\ell=i+1}^{N_j-1} dC^\ell_j ( \underbrace{K_j^\ell - K_j^{\ell+1}}_{\leq 0} ) \right\}\\
    & = &
        \sum_{i=0}^{N_j-1} dC_j^i \Big( (K_j^i-K)^+ - (K^{i+1}_j-K)^+  \Big) \label{eq:hatCsumdC}\\
    & = &
        \underbrace{ dC^0_j (K_j^0 - K)^+ }_{=0}
        -
        dC^{N_j-1}_j (K_j^{N_j} - K)^+
        +
        \sum_{i=1}^{N_j-1} \Big( \underbrace{ dC_j^i - dC_j^{i-1} }_{\geq 0} \Big) (K_j^i-K)^+ \nonumber
        \\
    & = & 
        \sum_{i=1}^{N_j} \Big( \underbrace{ dC_j^i - dC_j^{i-1} }_{= p_j^i \geq 0} \Big) (K_j^i-K)^+  \ = \eqref{eq:hatCputs}  \  .
\end{eqnarray}
where we used that we had defined~$dC_j^{N_j}:=0$.

The fact that $p_j$ is non-negative and sums up to~1 is clear. To show that $p_j$ has unit expectation follows from $\bar C_j(0)=1$.
\end{proof}
\begin{proposition}
    Assume the market is arbitrage-free in the sense of theorem~\ref{th:noarb}.
    
    Then, linear interpolation
    $\bar C$
    is arbitrage-free.
\end{proposition}
\begin{proof}
\hspace{1cm}

\begin{enumerate}
\item 
Equation~\eqref{eq:hatCsumdC} shows
$
    \bar C_j(0) = \sum_{i=0}^{N-1} dC_j^i ( K^i_j - K^{i+1}_j ) = 0 - (-1) = 1 
$.
\item
With $\partial K (K^i_j-K)^+|_{K=0} = \1_{K>K_j^0}$ we also see
$
    \partial_K \bar C_j(K)|_{K=0} = dC^0_j = -1 
$.
\item 
$\lim_{K\uparrow \infty} \bar C_j(K)=0$ is satisfied by constriction since $C_j(K^{N_j}_j)=0$.
\item Equation~\eqref{eq:hatCputs} shows that $\bar C_j$ is convex.
\item Considering~\eqref{eq:arbfreeinT} and~\eqref{eq:intT} $\bar C$ is increasing in $T$. Note that a convex
combination of two convex functions remains convex.
\end{enumerate}
\end{proof}

\subsubsection*{Why not Stay Linear}

Linear interpolation is very natural, and has the desirable property to exist if and only if the market (on a suitable strike grid)
is arbitrage-free. However, as pointed out in~\cite{buehler2006expensive}, linear interpolation is ``the most expensive''
interpolation of option prices which is consistent with the anchor prices.

In practice that means implied volatilities between interpolated strikes is too high. Figure~\ref{fig:linexp}
illustrates this effect with an example on real data.
\begin{figure}[H]
    \centering
    \includegraphics[width=0.4\linewidth]{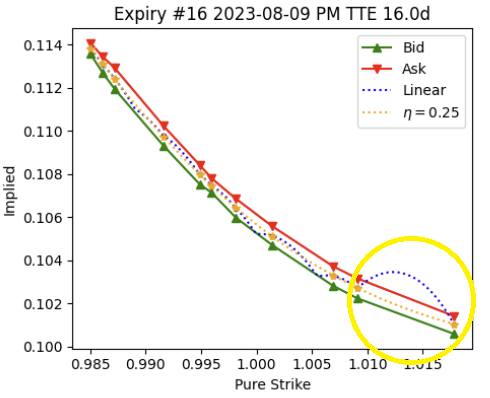}
    \caption{Effect on implied volatility of linearly interpolating call prices. The ``$\eta=0.25$'' interpolation
    corresponds to SANOS with smoothness~$0.25$ and visibly provides a superior fit.}
    \label{fig:linexp}
\end{figure}

\section{Smooth Arbitrage-Free Call Price Surfaces}\label{sec:smootharbfree}

In this section we present the main results of this article: 
While linear interpolation is a simple and powerful tool construct arbitrage-free call price surfaces, it essentially represents
a state-discrete underlying process. Indeed, $\bar C$ is not smooth with $\partial_{KK} \bar C(T,K)=0$ for all $K\not\in \{K^1_1,\ldots,K^{N_M}_M\}$ or $T\not\in\{ T_1,\ldots,T_M \}$.
The density $p(T,K)$ for the full surface can only have mass in $(T,K^i_j)$.

In most practical applications we aim to have a smooth representation of call price surfaces which allows pricing any option at any strike and expiry. Linear interpolation
might be mathematically valid, but it is also extreme: indeed, linear interpolation of a set of arbitrage-free market prices provides ``the most expensive interpolation scheme
available''~\cite{buehler2006expensive}. That means it is not helpful for inferring insights on values of options between quoted strikes and expiries. From a practical perspective a relatively sparse linear interpolator does not lend itself well to the use of continuous state methods, chiefly Dupire's famous local volatility~\cite{dupire1994pricing},
given on a smooth grid as
$$
    \sigma(T,K)^2 := 2 \frac{ \partial_T C(T,K) }{ K^2 \partial_{KK} C(T,K) dT } \ .
$$
(In~\cite{DLV} we presented with ``discrete local volatilities'' an alternative
to Dupire's local volatility which works with large time steps and discrete market data.)
 
We now present an alternative to interpolate a set of arbitrage-free market call prices with smooth functions such that the resulting surface is itself also smooth.
We start with the following reference model which performs surprisingly well
despite its simplicity. We managed to fit it to a wide range of SPX market data.
Section~\ref{sec:implementation} on page~\pageref{sec:implementation} 
discusses implementation.

\begin{theorem}[Smooth Call Prices]\label{th:simplersmooth}
    Let $Y=(Y_t)_{t\geq 0}$ be an atomless positive martingale with full support over
    $(0,\infty)$ and unit mean.
    Set $Y_j:=Y_{T_j}$.
    Let $q=(q_0,\ldots,q_M)$ be an at martingale density as defined in definition~\ref{def:martdens} 
    and let
    \begin{equation}\label{eq:CasSumC}
        \hat C_j(K) := \sum_{i=1}^{N} q_j^i\ \E[ ( K^i_j Y_j - K )^+ ] \ .        
    \end{equation}
    Using a time-interpolation scheme~$\alpha$ such as~\eqref{eq:intT} we define call prices for expiries $T\in(T_{j-1},T_j]$
    as
    \begin{equation}\label{eq:alpha_prod}
        \hat C(T,K) := \alpha_j(T) \sum_{i=1}^{N_j} q_j^i\ \E[ ( K^i_j Y_T - K )^+ ] + (1-\alpha_j(T)) \sum_{i=1}^{N_{j-1}} q_{j-1}^i\ \E[ ( K^i_{j-1} Y_T - K )^+ ] \ .
    \end{equation}
    For $T>T_M$ we extrapolate using $Y$:
    \begin{equation}\label{eq:extra}
        \hat C(T,K) := \sum_{i=1}^{N_M} q_j^i\ \E[ ( K^i_M Y_T - K )^+ ]  \ .
    \end{equation}

    Then $\hat C$ is a smooth call price function.
    
    Moreover, if~$q=p$, then $\hat C_j \geq \bar C_j$ with equality if $Y\equiv 1$ in which
    case~$\hat C$ 
        reduces to linear interpolation.
\end{theorem}

\noindent
The discrete-time martingale representation for~$\hat C_0,\ldots,\hat C_M$ is given by
\begin{equation}\label{eq:Zmart}
    Z_j := X_j Y_j     
\end{equation}
in the sense that $\E[ (Z_j-K)^+] = \hat C_j(K)$
where $X\sim q$ is the discrete state Markov martingale~\eqref{eq:transX}.

\begin{proof}[Proof of the theorem]
Strictly speaking~\eqref{eq:Zmart} is sufficient to prove that 
$\hat C$ is arbitrage-free. For clarity with still provide a brief proof:
    \begin{enumerate}
        \item $\hat C_j(0) = \sum_i q_j^i\, K^i_j \E[ Y_j ] = 1$ because $q_j$ and $Y_j$ has unit mean.
        \item $\partial_K \hat C_j(0)= - \sum_i q_j^i\, \P[ K^i_j Y_j > 0 ] = -1$ because $Y_j>0$ and~$q_j$ has unit mean.
        \item $\lim_{K\uparrow \infty} \hat C_j(K) = 0$ because $\E[Y_j]=1$ and dominated convergence applies.
        \item $\partial_{KK} \hat C_j(K) > 0$ by construction for $K>0$.
        \item To show that call prices are increasing in time, consider
        \begin{eqnarray}
             \sum_{i=1}^{N_{j+1}} q_{j+1}^i\, \E[ ( K^i_{j+1} Y_{j+1}- K )^+ ] \nonumber
             & \supstack{(*)}\geq &
             \sum_{i=1}^{N_{j+1}} q_{j+1}^i\, \E[ ( K^i_{j+1} Y_j - K )^+ ] \nonumber\\
              & \supstack{d\P^Y := Y d\P}= &
            \E^Y\left[  \sum_{i=1}^{N_{j+1}} q_{j+1}^i\, ( K^i_{j+1}  - K/Y_j  )^+ \right] \nonumber\\
             & \supstack{(**)}\geq&
            \E^Y\left[  \sum_{i=1}^{N_j} q_j^i\,  ( K^i_j - K/Y_j  )^+ \right] \nonumber\\
             & =&
             \sum_{i=1}^{N_j} q_j^i\, \E[ ( K^i_j Y_j - K )^+ ] \nonumber
        \end{eqnarray}
        where $(*)$ follows because~$Y$ is a martingale, and $(**)$ follows as before because
        $q$ is a martingale density which means, in particular,
        $$
        \sum_{i=1}^{N_{j+1}} q_{j+1}^i\, ( K^i_{j+1} - K )^+ 
        \geq
             \sum_{i=1}^{N_j} q_j^i\, ( K^i_j - K )^+ 
        $$
        for all $K$. Applying the linear expectation operator~$\E^Y[\cdot]$ preserves the inequality.

        It is clear that monotonicity for arbitrary $T$ is preserved by our time-interpolation  scheme.
    \end{enumerate}
\end{proof}
\begin{remark}
Our approach is a special case of the following concept: let $X=(X_t)_t$ and $Y=(Y_t)_t$ be two independent martingales.
Then, $Z_t:=X_t Y_t$ is again a martingale and has call prices
$$
    \E[ (Z_t-K)^+ ]  = \E[ ( X_tY_t - K)^+ ] = \int_0^\infty \E[ ( x\,Y_t - K)^+ ]\, \P[ X_t=x ]\,dx \ ,
$$
mirroring~\eqref{eq:Zmart}. It is worth noting that any classic arbitrage-free pricing model~$Y$ can be used for our model. Heston's model~\cite{heston1993closed} for example provides a rich surface with pronounced skew which can be used
as a background martingale.
\end{remark}

\subsection{Production Model}\label{sec:prod_model}

For our proposed production model whose implementation we will discuss in section~\ref{sec:implementation}
we propose a very simplistic structure which still fits market data remarkably well: we will basically assume $Y$ is log-normal.
Assume that $V_1<\cdots<V_M$ are ATM variances from the market for model expiries $0<T_1<\cdots<T_M$ and let $\eta\in[0,1)$ be a parameter which controls the smoothness
of our model. Given model strikes $K^i_j$ our production model is then given as
\begin{equation}\label{eq:prod_model}
        \hat C_j(K) := \sum_{i=1}^{N_j} q_j^i\, \mathrm{Call}\big( K^i_j,\, K,\, \eta V_j \big)
\end{equation}
We then solve for~$q$ to globally fit the model within market bid/ask prices for market expiries and strikes.

The parameter~$\eta$ allows controlling the desired smoothness of~\eqref{eq:prod_model}: if it is set to zero, 
the model is equivalent to the linear model discussed in section~\ref{sec:lin_surface}.
By increasing~$\eta$ we can then smoothen our model. In the extreme case $\eta=1$ the model can only fit ATM options.
Figures~\ref{fig:example1DTE},~\ref{fig:example35DTE} and~\ref{fig:example1249DTE} illustrate three
different cases of fits with $\eta=0$, $\eta=0.25^2\approx 0.06$ and~$\eta=0.5^2=0.25$. The latter is our recommended
default setting.

\begin{figure}[H]
    \centering
    \includegraphics[width=0.9\linewidth]{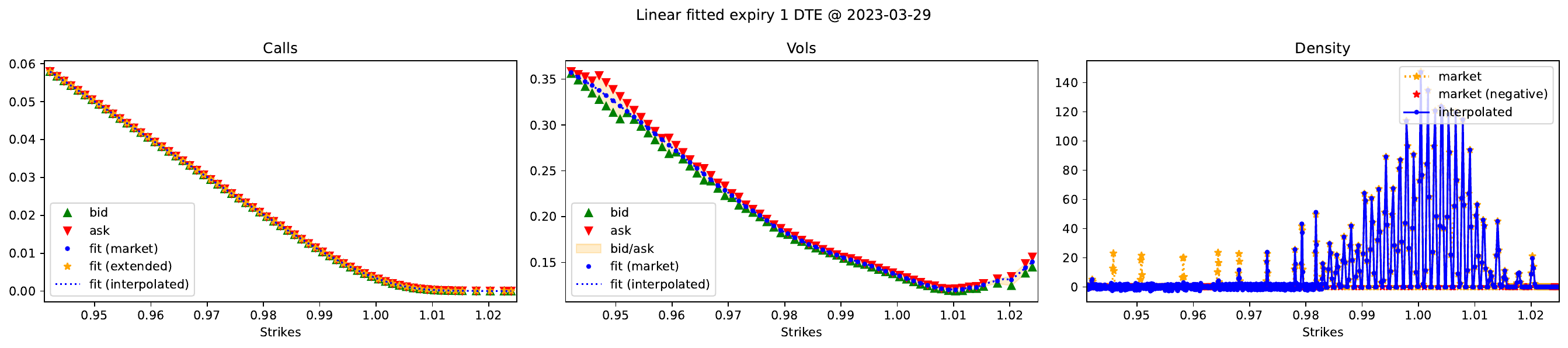}\\
    \includegraphics[width=0.9\linewidth]{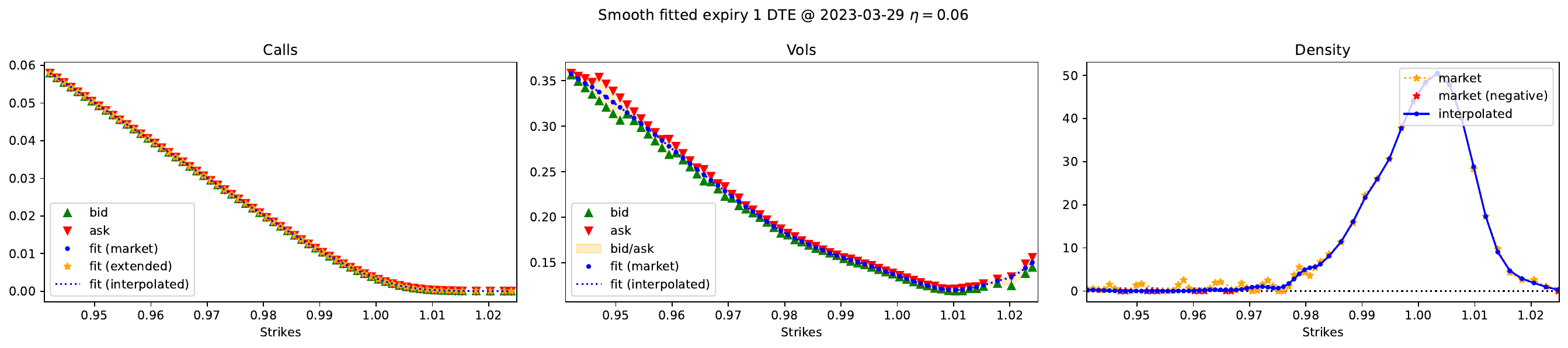}\\
    \includegraphics[width=0.9\linewidth]{figs/smoothvol_2023-03-29_dte-1DTE_eta-0.25.pdf}
    \caption{Fit of our model with different smoothness parameters~$\eta$ to~1~day to expiry options on SPX. The right hand side shows the density which is much smoother
    for the case~$\eta=0.25$ while actually retaining a decent fit: the biggest fitting error is~40\% of vol spread.}
    \label{fig:example1DTE}
\end{figure}

\begin{figure}[H]
    \centering
    \includegraphics[width=0.9\linewidth]{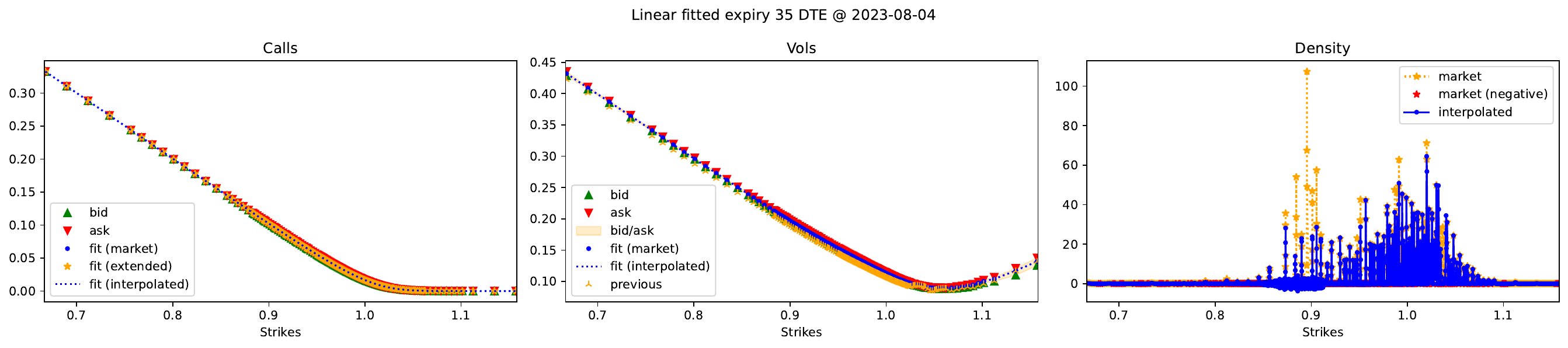}\\
    \includegraphics[width=0.9\linewidth]{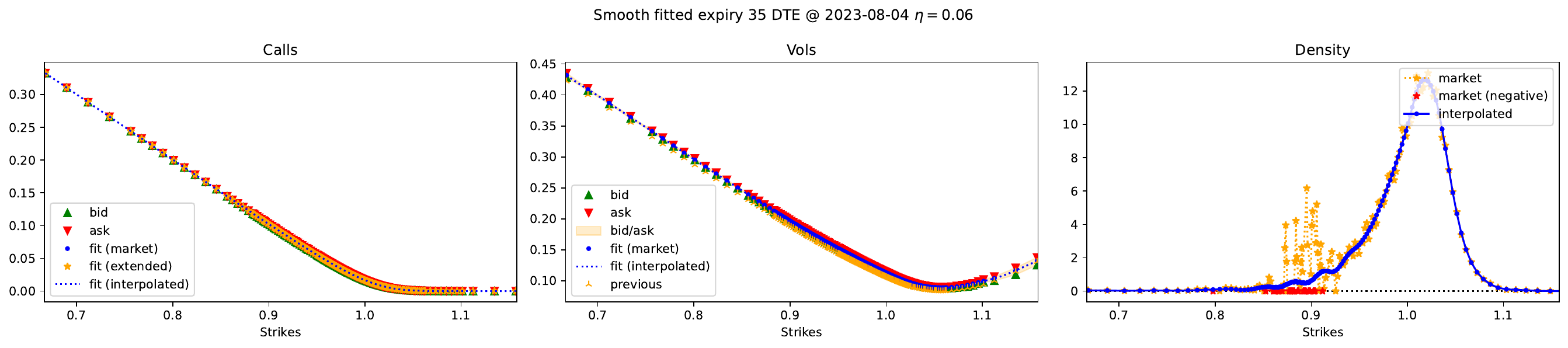}\\
    \includegraphics[width=0.9\linewidth]{figs/smoothvol_2023-08-04_dte-35DTE_eta-0.25.pdf}
    \caption{Fit of our model with different smoothness parameters~$\eta$ to SPX options with~35 days to expiry. In this case~$\eta=0.06$ fits the market within bid/ask
    while imposing sufficient smoothness on the model.
    }
    \label{fig:example35DTE}
\end{figure}

\begin{figure}[H]
    \centering
    \includegraphics[width=0.9\linewidth]{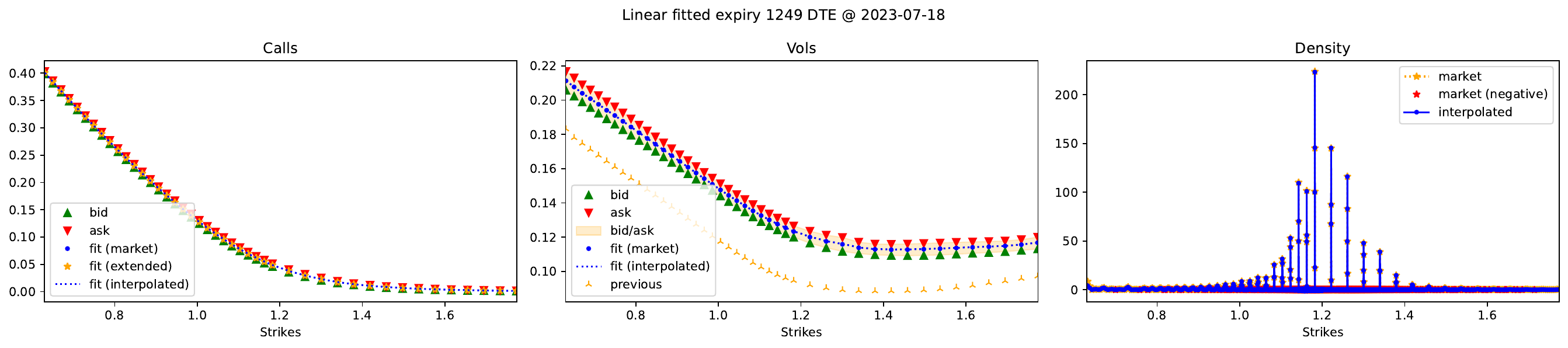}\\
    \includegraphics[width=0.9\linewidth]{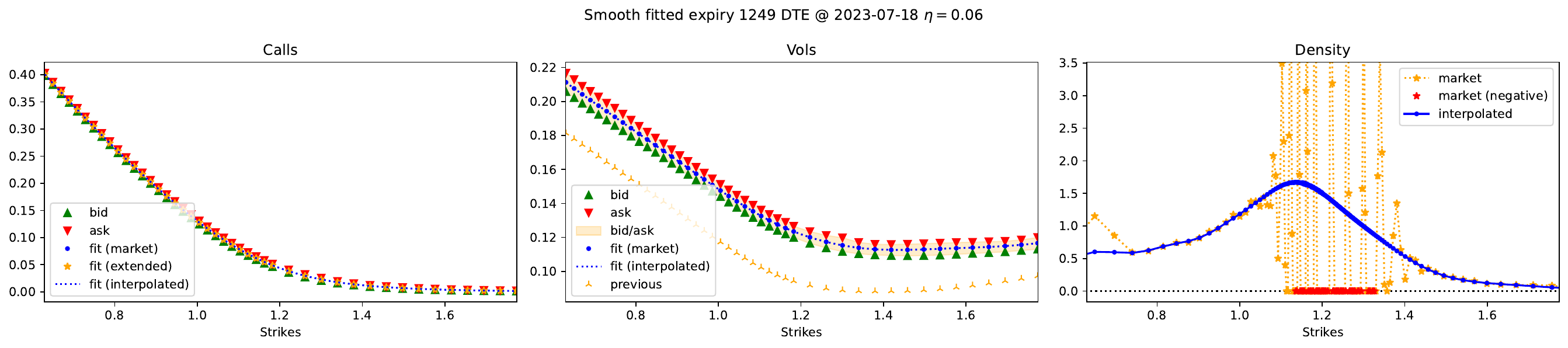}\\
    \includegraphics[width=0.9\linewidth]{figs/smoothvol_2023-07-18_dte-1249DTE_eta-0.25.pdf}
    \caption{Fit of our model with different smoothness parameters~$\eta$ to SPX options with~1249 days to expiry. 
    }
    \label{fig:example1249DTE}
\end{figure}

It should be noted that extrapolation within the range where implied volatilities can numerically
be computed extrapolation is also fairly natural. Figure~\ref{fig:extrapolatio}
illustrates this with an example.

\begin{figure}[H]
    \centering
    \includegraphics[width=0.9\linewidth]{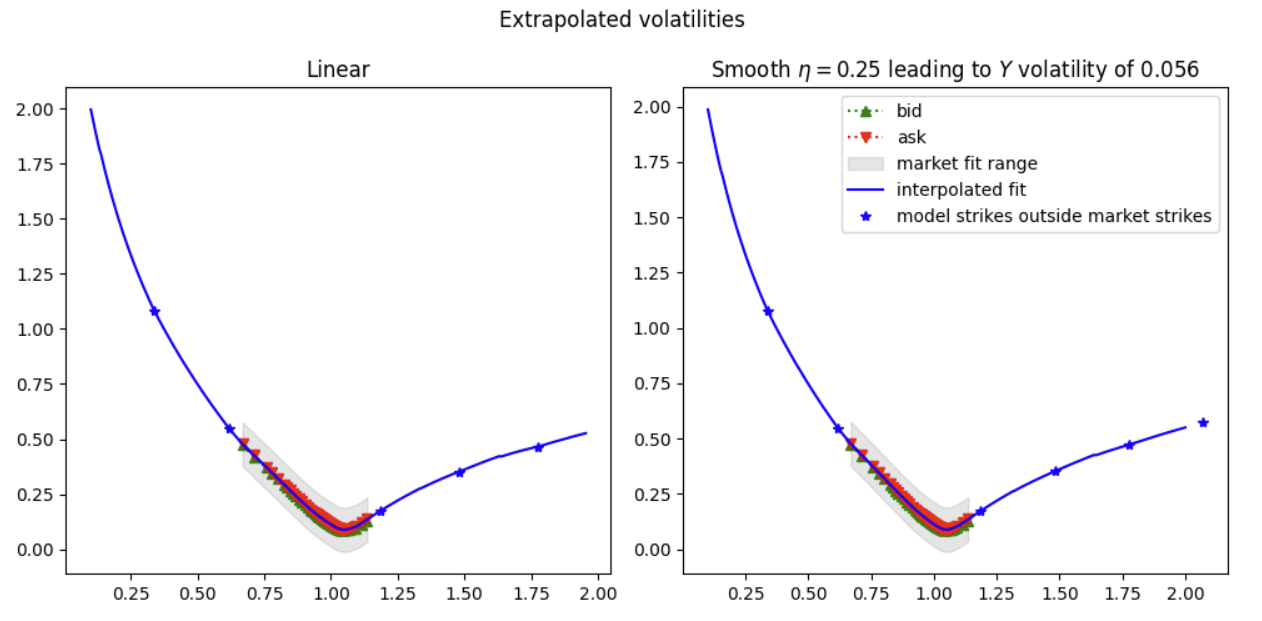}
    \caption{An example for extrapolation. The blue marks indicate
    added model strikes outside the observed market strike range.
    }
    \label{fig:extrapolatio}
\end{figure}

\begin{remark}[Generalized Interpolation in Strike]
    The proof of theorem~\ref{th:simplersmooth} that the function~$K\mapsto \hat C_j(K)$ is
    arbitrage-free in sense of satisfying~\ref{it:caf_C0_1}-\ref{it:caf_cvx}
    of definition~\ref{th:contarbfree} remains valid if we replace the common~$Y$ with a martingale by 
    strike:
    $$
        {\hat {\hat C}}_j(K) := \sum_{i=1}^{N} q_j^i\ \E[ ( K^i_j Y^i_j - K )^+ ] \ .
    $$
    In the log-normal case translates to using different variances
    per strike:
    \begin{equation}\label{eq:inhC}
        {\hat {\hat C}}_j(K)= \sum_{i=1}^{N} q_j^i\, \mathrm{Call}\big( K^i, K, \eta V_j^i \big) \  .
    \end{equation}
    This has also been found in~\cite{elalaoui2023}.
    However, in this case we have not found a numerically efficient 
    condition on the $V$'s to ensure absence of arbitrage in time for all strikes
    except in the case where the anchor call prices themselves are generated
    by a common martingale~$Y$.
    For example it is not sufficient that the two expiries
    are in order at the market strikes
    as the synthetic example in figure~\ref{fig:vioimh} shows.

\begin{figure}[H]
    \centering
    \includegraphics[width=0.5\linewidth]{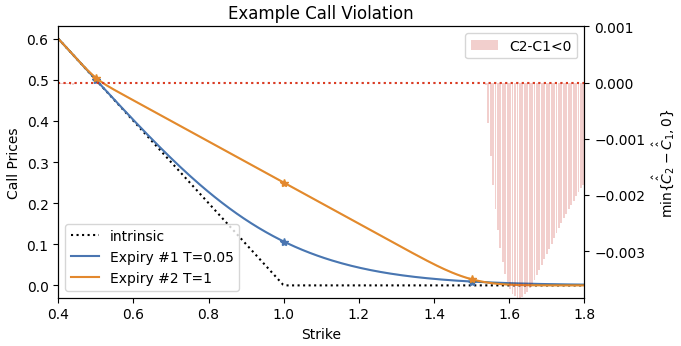}
    \caption{
        Example of the idea~\eqref{eq:inhC} where the call prices at two expiries
        are increasing in the market strikes~$K=(0.5,1,1.5)$ but not
        at the extrapolated strikes $K\geq 1.6$ (the error is plotted at the right hand axis). The example uses constant
        log-normal volatilities $(0.05,1.2,0.05)$ for the three strikes for
        both expiries. The densities are $q_1=(0,1,0)$ and $q_2=(0.5,0,0.5)$.
    }
    \label{fig:vioimh}
\end{figure}

    In section~\ref{sec:genr} we present a different, formally correct
    generalization to different~$Y^i_j$'s per strike and expiry; however
    it is numerically much less efficient than our production model presented here
    and prohibitively expensive to run.
\end{remark}

\subsection{Efficient Parametrization}

Fitting the  model~\eqref{eq:CasSumC} given variances~$0<V_1<\cdots<V_M$
means finding a martingale density with marginals~$q_j$ such that the model
prices are within bid/ask spread of observed market data. This can easily be
implemented using linear programming as is shown in section~\ref{sec:implementation}
which also covers the case of different strikes per expiry (which is the
case in practise).

However, if we aim to use our parametrization in generative models then
its representation in terms of a  martingale density is not convenient as 
the marginals~$q_1,\ldots,q_M$
have to satisfy the linear constraints of definition~\ref{def:martdens}. 
For such applications we propose parameterizing the martingale density in terms of
the \emph{discrete local volatilities} introduced in~\cite{DLV}. This parametrization
has been used to great effect in  multi-asset arbitrage-free 
option surface simulators in~\cite{wiese2021multiassetspotoptionmarket}, albeit
on discrete grids of strikes.
With our approach here
such generative model is now able to generate smooth arbitrage-free option surfaces.

We start first with the case of homogeneous strikes $0=K^0<K^1<\cdots<1<\cdots<K^N$
across expiries which is much simpler and an ingredient into the inhomogeneous case.
This material is from~\cite{DLV} with notations aligned.

\begin{theorem}[Transition Operators from Discrete Local Volatilities]\label{th:dlvtrs}
Let~$\Sigma^i_j\geq 0$ for $i=2,\ldots,N-1$ and $j=1,\ldots,M$ be
a surface of input \emph{discrete local volatilities}.
Let
$$
    \gamma^{i+}_j := \frac1{\frac12(K^{i+1}-K^{i-1})} \frac1{ K^{i+1} - K^i } \ \ \ \mbox{and}  \ \ \ \gamma^{i-}_j := \frac1{\frac12(K^{i+1}-K^{i-1})} \frac1{ K^i - K^{i-1} } 
$$
and define
\begin{equation}\label{eq:wDLV}
    w^{i\pm}_j := \frac12 (\Sigma^i_j K^i)^2 (T_j-T_{j-1})\ \gamma^{i\pm}_j 
\end{equation}
and $w^{1\pm}_j = w^{N\pm}_j : = 0$.
Then the inverse~$Q_{j|j-1}$ of the tri-band matrix
\begin{equation}\label{eq:implicit}
    Q^{-1}_{j|j-1} :=
    \left( 
        \begin{NiceArray}{c:c:c:c:c}
             1 & -w_j^{2-} \\
             \hdottedline \\
             0 & 1+w^{2-}_j+w^{2+}_j \\
             \hdottedline \\
              & -w^{2+}_j & \ddots & -w^{(N-1)-}_j \\
             \hdottedline \\
              & & & 1+w^{(N-1)-}_j+w^{(N-1)+}_j & 0 \\
             \hdottedline \\
             & & & -w^{(N-1)+}_j & 1 
        \end{NiceArray}
    \right)
    \in \R^{N\times N}
\end{equation}
is a martingale transition operator.

Starting in~$q_0=(1)$ we then obtain the corresponding martingale densities
$$
    q_j :=  Q_{j|j-1} \cdot q_{j-1} \ . 
$$
\end{theorem}

\noindent
The proof is provided in appendix~\ref{sec:appendix_DLV} on page~\pageref{sec:appendix_DLV}.

\begin{remark} The tri-band matrix $Q^{-1}_{j|j-1}$ can be written in terms of a vector
$\Sigma_j=(0, \Sigma^2_j,\ldots, \Sigma^{N-1}_j, 0)$
as follows:
$$
    Q^{-1}_{j|j-1} = E + \Omega\ \Sigma_j{}^2 (T_j-T_{j-1})
$$
(where the right hand multiplication of the matrix with the vector is column-wise) 
for pre-computed matrices
$$
    \Omega  := 
    \left( 
        \begin{NiceArray}{c:c:c:c:c}
             0 & -\omega_j^{2-} \\
             \hdottedline \\
             0 & \omega^{2-}_j+\omega^{2+}_j \\
             \hdottedline \\
              & -\omega^{2+}_j & \ddots & -\omega^{(N-1)-}_j \\
             \hdottedline \\
              & & & \omega^{(N-1)-}_j+\omega^{(N-1)+}_j & 0 \\
             \hdottedline \\
             & & & -\omega^{(N-1)+}_j & 0
        \end{NiceArray}
    \right)
    \in \R^{N\times N}
$$
with
$$
    \omega^{i+}_j := \frac{
                        (K^i)^2 
                        }{(K^{i+1}-K^{i-1})( K^{i+1} - K^i) } 
                        \ \ \ \mbox{and}  \ \ \ 
    \omega^{i-}_j := \frac{
                        (K^i)^2 
                        }{(K^{i+1}-K^{i-1})( K^i - K^{i-1} )}  \ .
$$
\end{remark}

We now expand our definition to inhomogeneous strikes. 
Assume therefore we are given general strikes $0=K_0<K^{\min}=K^1_j<\cdots<1<\cdots<K^{N_j}_j=K^{\max}$
    and discrete local volatilities $\Sigma_j=(\Sigma^2_j,\ldots,\Sigma^{N_j-1}_j)'$
    which each imply a transition operator $Q_{j|j-1}\in\R^{N_j\times N_j}$, with the caveat that the latter is defined
    to transition a density $\bar q_{j-1}$ defined over strikes $K_j$ into a density $q_j$, also
    defined over $K_j$. However, the previous density $q_{j-1}$ is given over strikes $K_{j-1}$.
We apply the idea from~\cite{DLV} to apply linear interpolation between expiries i.e.
\begin{equation}\label{eq:barcintlp}
    \bar C_{j-1}^\ell := \sum_{i=1}^{N_{j-1}} q_{j-1}^i ( K^i_{j-1} - K^\ell_j )^+    
\end{equation}
for $\ell=0,\ldots,N_j$. We note that $\bar C^{0}_{j-1} = 1$ and $\bar C^{N_j}_{j-1} = 0$. These call prices are arbitrage-free and imply a density
\begin{eqnarray*}
    \bar q^\ell_j & := &
        \frac{ \bar C_{j-1}^{\ell+1} - \bar C_{j-1}^{\ell} }{ K^{\ell+1}_j - K^{\ell}_j }
        - 
        \frac{ \bar C_{j-1}^{\ell} - \bar C_{j-1}^{\ell-1} }{ K^{\ell}_j - K^{\ell-1}_j }
        \\
        &  = &
        \sum_{i=1}^{N_{j-1}} q_{j-1}^i
        \left( 
        \frac{ ( K^i_{j-1} - K^{\ell+1}_j )^+ - ( K^i_{j-1} - K^\ell_j )^+ }{ K^{\ell+1}_j - K^{\ell}_j }
        -
        \frac{ ( K^i_{j-1} - K^{\ell}_j )^+ - ( K^i_{j-1} - K^{\ell-1}_j )^+ }{ K^{\ell}_j - K^{\ell-1}_j }
        \right)
\end{eqnarray*}
for $\ell=1,\ldots,N_j$ where $K^{N_j+1} > K^{N_j}$ is arbitrary. The term on the right is $L^{\ell,i}_{j|j-1}$
for $\ell=1,\ldots,N_j$ and $i=1,\ldots,N_{j-1}$. That means:

\begin{theorem}\label{th:lintransition}
    Assume that $q_{j-1}$ is a density over $K_{j-1}$ with unit mean.
    Define the matrix $L_{j|j-1} \in \R^{N_j\times N_{j-1}}_{\geq0}$ with elements
    \begin{equation}\label{eq:L}
        L_{j|j-1}^{\ell,i} := 
        \frac{ ( K^i_{j-1} - K^{\ell+1}_j )^+ - ( K^i_{j-1} - K^\ell_j )^+ }{ K^{\ell+1}_j - K^{\ell}_j }
        -
        \frac{ ( K^i_{j-1} - K^{\ell}_j )^+ - ( K^i_{j-1} - K^{\ell-1}_j )^+ }{ K^{\ell}_j - K^{\ell-1}_j }
        \geq 0
    \end{equation}
    for $\ell=1,\ldots,N_j$ and $i=1,\ldots,N_{j-1}$.
    Then
    $$
        \bar q_{j-1} = L_{j|j-1} q_{j-1} \in \R^{N_j} 
    $$
    is a density with unit mean over~$K_j$ where $\bar q_{j-1}$ represents the density
    implied by linearly interpolated call prices computed using $p_{j-1}$ at strikes~$K_j$.
    
    Finally,
    $$
        K_j L_{j|j-1} = K_{j-1} \ .
    $$
\end{theorem}
\noindent 
The proof that $1L_{j|j-1}=1$ and $ K_j L_{j|j-1} = K_{j-1}$ is in the appendix.

\begin{theorem}[Implied Discrete Local Volatilities]\label{th:dlvtoQ}
Assume that the discrete call prices $C^i_j$ for $j=1,\ldots,M$ and $i=1,\ldots,N_j$
are arbitrage-free and define
    \begin{equation}\label{eq:dlv2}
        \Sigma^i_j := \sqrt{ 2 \frac{  \Theta_j^i }{ K^i_j{}^2 \, \Gamma_j^i } }
        \ \ \ \mbox{with}\ \ \
        \Theta_j^i :=\frac{ C_j^i - \bar C_{j-1}^i }{ T_j - T_{j-1}}
        \ \ \ \mbox{and}\ \ \ 
        \Gamma_j^i := \frac{ dC_j^i - dC_j^{i-1} }{ \frac12 (K_j^{i+1} - K_j^{i-1}) } 
    \end{equation}
    for $j=1,\ldots,M$ and $i=2,\ldots,N_j-1$ with $0/0:=0$. Here, $\bar C_{j-1}^i$
    is again given by linear interpolation of~$C_{j-1}$ as in~\eqref{eq:barcintlp}.
    
    In this case
    \begin{equation}\label{eq:wself}
        w^{i\pm}_j =  \frac{ C_j^i - \bar C_{j-1}^i }{ p_j^i }  \bar \gamma^{i\pm}_j
        \ \ \ \mbox{with} \ \ \
        \bar \gamma^{i+}_j:=\frac1{K_j^{i+1}-K_j^i}
        \ , \ \ \
        \bar \gamma^{i-}_j:=\frac1{K_j^i-K_j^{i-1}} \ .
    \end{equation}
Then, the transition operator~$Q_{j|j-1}$ reproduces discrete input call prices
$C^1_1,\ldots,C^N_M$ in the sense
that $p_j = Q_{j|j-1} \tilde p_{j-1}$.
\end{theorem}
\noindent
The proof is also provided in appendix~\ref{sec:appendix_DLV} on page~\pageref{sec:appendix_DLV}.
\\

Essentially the statement of this section is that in order to simulate a smooth arbitrage-free
option surface, fix (or also simulate) increasing $0<V_1<\cdots<V_M$ and
a $(N-2)\times M$ grid of discrete non-negative local volatilities~$\Sigma^i_j$. They have
the natural scale of annualized volatilities (indeed, as the grid density
increases the discrete local volatilities approach Dupire's continuous time
local volatility). The only constraint on the discrete local volatilities
is non-negativity, making them unique suitable for simulation and descriptive 
statistics. This is subject to further work.

\subsection{Smoothed Linear Interpolation within Bid/Ask}

In most of this article balance of focus is towards smoothness vs perfection of fit.
Indeed, the main purpose of our model is to ``fit'' a bigger number of market option prices with
a smooth model with fewer model strikes and expiries.
However, our approach can also be used for what we call ``smoothed linear interpolation'' within the bid/ask spread
following~\cite{cohen2020detecting}.

For this application we assume the model expiries match market expiries and that our (inner) model strikes $K^2_1,\ldots,K_M^{N_M-1}$
correspond to actual market strikes. Assume now that asks $A^i_j$ and bids $B^i_j$ for $j=1,\ldots,M$ and $i=2,\ldots,N_j-1$
have a non-zero spread. Assume further there exist arbitrage-free prices $C^i_j$
such that $B^i_j\leq C^i_j<A^i_j$ outside $i=0,1,N_j$. (We set $C^i_j:=(1-K^i_j)^+$ for $i=0,1,N_j$ in line with
theorem~\ref{th:noarb}.) We may find an arbitrage-free $C$ within bid/ask using linear programming.
Denote by $p$ the density implied by~$C$ via~\eqref{def:p}.

Similar to our production model, let
$$
    \hat C_j^i(\theta) := \sum_{\ell=1}^{N_j} p_j^\ell \mathrm{Call}\!\left( K_j^\ell, K^i_j, \theta T \right) \ .
$$
where $p_1,\ldots,p_M$ are the martingale densities computed from~$C$ via~\eqref{def:p}.
\begin{proposition}[Smoothed Linear Interpolation within Bid/Ask]\label{prop:interpol}
For all $\theta\geq 0$ we have $\hat C_i^\ell(\theta)\geq B_j^i$ for $i=2,\ldots,N_j-1$.

Moreover 
there exists a maximal $\theta^*>0$ such that $\hat C$ \emph{interpolates within bid/ask} in the sense that $B_j^i\leq \hat C_i^\ell(\theta^*)\leq A^i_j$
for $i=2,\ldots,N_j-1$.
\end{proposition}
\begin{proof}
This is a consequence of the fact that 
$
\hat C_j^i(\theta) \downarrow C^i_j \geq B^i_j
$
smoothly and monotonic as $\theta\downarrow 0$ for each pair $(j,i)$ with $i=2,\ldots,N_j-1$.
\end{proof}

\subsection{Generalization}\label{sec:genr}

The model~\eqref{eq:CasSumC} shines due to its simplicity. From a practical point
of view it is entirely sufficient for fitting observed market data within bid/ask
spreads.

For completeness we present now an extension which provides a more
advanced version which features a martingale~$Y$ per strike. We focus on 
a log-normal version for homogeneous strikes $0<K^1<\cdots<K^N$ for clarity, but other driving models~$Y$ can be utilized.
We view this model primarily as a conceptual result; in practice the production model already captures observed surfaces to high accuracy.

\begin{theorem}[Iterative Smooth Call Prices]\label{th:fullsmooth}
    Let $dV_j^i\geq 0$ be forward Black-Scholes implied variances for $i=1,\ldots,N$
    and $j=1,\ldots,M$. Let $q_1,\ldots,q_M$ be marginal densities
    with unit mean (e.g.~they do not have
    to be martingale densities). Define iteratively
    \begin{equation}\label{eq:tildeC1}
        \tilde c_1(K;v) := \sum_{i_1=1}^N q_1^{i_1}\, \mathrm{Call}\big( K^i, K;\,dV_1^i + v\big)
    \end{equation}
    and
    \begin{equation}\label{eq:tildeC1j}
        \tilde c_j(K; v) := \sum_{i_j=1}^N q_j^{i_j}\, K^{i_j}\ \tilde c_{j-1}\!\left(\frac{K}{K^{i_j}}; dV^{i_j}_j + v\right) 
    \end{equation}
    and then
    \begin{equation}\label{eq:tildeCTK}
        \tilde C_j(K) := \tilde c_{j+1}(K,0) \ \ \ \mbox{and} \ \ \
        \tilde C(T,K) := \alpha_j(T) \tilde C_{j+1}(K) + (1-\alpha_j(T))\, \tilde C_j(K) 
    \end{equation}
    with~$\alpha$ such as defined in~\eqref{eq:intT}.

    Then, $\tilde C$ is a smooth arbitrage-free option surface.    
\end{theorem}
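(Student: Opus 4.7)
The strategy is to recognize the iterative definition \eqref{eq:tildeC1}–\eqref{eq:tildeC1j} as nothing more than a carefully bookkept call-pricing formula on an explicit discrete-time product martingale, so that arbitrage-freeness follows from Theorem~\ref{th:contarbfree} and smoothness follows by inspection. First, I would guess and verify, by induction on $j$, the probabilistic interpretation
$$
\tilde c_j(K;v) \;=\; \E\!\left[\bigl(Z_j\, e^{\xi_v}-K\bigr)^+\right],
\qquad Z_j \;:=\; \prod_{k=1}^{j} X_k\, Y_k,
$$
where the $X_k$ are mutually independent with $\P(X_k=K^i)=q_k^i$; where, conditional on $X_k=K^i$, the factor $Y_k$ is log-normal with unit mean and variance $dV_k^i$; and where $e^{\xi_v}$ is an independent log-normal with unit mean and variance $v$. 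The base case $j=1$ is the definition. The inductive step uses Black–Scholes homogeneity $\mathrm{Call}(\lambda s,\lambda k,V)=\lambda\,\mathrm{Call}(s,k,V)$ to pull $K^{i_j}$ inside and log-normal convolution (the variance of a product of two independent unit-mean log-normals is the sum of the variances) to merge $dV_j^{i_j}$ with the carried $v$; that gives precisely the recursion~\eqref{eq:tildeC1j}.

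Next I would verify that $(Z_j)_{j=1}^M$ is a positive martingale with $\E[Z_j]=1$. Positivity is immediate since $K^i>0$ for $i\ge 1$ and $Y_k>0$. For the martingale property, conditional on the natural filtration $\mathcal{F}_{k}$,
$$
\E\!\left[X_{k+1}Y_{k+1}\mid\mathcal{F}_k\right]
\;=\;\E\!\left[X_{k+1}\,\E[Y_{k+1}\mid X_{k+1}]\right]
\;=\;\E[X_{k+1}]
\;=\;\sum_{i=1}^N q_{k+1}^i K^i \;=\; 1,
$$
which crucially uses the assumption that each $q_j$ has unit mean (the only place this hypothesis enters). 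Setting $v=0$ gives $\tilde C_j(K)=\E[(Z_j-K)^+]$, so conditions~\ref{it:caf_C0_1}–\ref{it:caf_limC} of Theorem~\ref{th:contarbfree} follow from $\E[Z_j]=1$, $\P(Z_j>0)=1$, and dominated convergence; condition~\ref{it:caf_cvx} follows from convexity of $z\mapsto(z-K)^+$; and the time-monotonicity~\ref{it:caf_dt} is a verbatim copy of step~5 in the proof of Theorem~\ref{th:simplersmooth}, applying conditional Jensen to $Z_{j+1}=Z_j\cdot(X_{j+1}Y_{j+1})$ with conditional expectation $1$. The time interpolation~\eqref{eq:tildeCTK} is then handled exactly as in Definition~\ref{th:linintp} and Theorem~\ref{th:simplersmooth}: a convex combination of convex, arbitrage-free surfaces remains convex and arbitrage-free, and monotonicity in $T$ follows from monotonicity of $\alpha_j$ together with $\tilde C_{j+1}\ge \tilde C_j$.

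For smoothness, $\tilde c_j(K;v)$ unrolls to a convex combination of Black–Scholes prices $\mathrm{Call}(\prod_{k}K^{i_k},K,\sum_{k}dV_k^{i_k})$, each of which is strictly convex in $K$ whenever its total variance is positive (because the log-normal density has full support). Thus $\partial_{KK}\tilde C_j(K)>0$ for all $K>0$ under the mild non-degeneracy that along every realised path $(i_1,\ldots,i_j)$ with positive $q$-probability at least one $dV_k^{i_k}$ is strictly positive; I would state this explicitly as a remark, paralleling the analogous caveat already made for $\eta>0$ in the production model.

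The main obstacle is the inductive identification in the first paragraph. The recursion couples two operations that are easy to confuse: the BS homogeneity that absorbs $K^{i_j}$ into the asset and strike, and the fact that the fresh log-normal kick has a variance $dV_j^{i_j}$ that itself depends on the random destination $X_j$, so the carried variance must be tracked conditionally. Getting the conditioning and the variance-additivity right is the only real bookkeeping hazard; once that identification is in hand the rest is an application of Theorem~\ref{th:contarbfree} together with the monotonicity argument already used for Theorem~\ref{th:simplersmooth}.
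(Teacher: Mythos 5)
Your proof is correct and follows the same route as the paper, which identifies $\tilde C_j$ as the call price on the explicit product martingale $Z_j = \prod_{\ell\le j} X_\ell\, Y_\ell^{X_\ell}$ and then declares the rest ``trivial''; your inductive verification via Black--Scholes homogeneity and log-normal variance additivity, and the martingale check exploiting the unit-mean hypothesis on the $q_j$, supply exactly the bookkeeping the paper leaves to the reader. Two small notes: the time-monotonicity step here is actually simpler than step~5 of Theorem~\ref{th:simplersmooth} (no change of measure and no martingale-density hypothesis are needed, since $Z$ is already a martingale by independence and unit mean of each factor, so plain conditional Jensen suffices, as you in fact apply), and your caveat that strict convexity requires a strictly positive accumulated variance along every realized path is a reasonable observation that the paper's statement ($dV_j^i\ge 0$) leaves implicit.
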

\noindent
The proof is trivial observing that this pricing scheme 
is realized by
\begin{equation}\label{eq:Zcomplex}
    Z_j := \prod_{\ell=1}^j X_\ell Y_\ell^{X_\ell}
\end{equation}
with $X_\ell\sim q_\ell$
and $Y_\ell^i := \exp\left( \sqrt{ dV_\ell^i } N_\ell^i - \mbox{$\frac12$} dV_\ell^i \right)$
where $N_j^i$ is iid standard normal.

\begin{proposition}
    We may write
    \begin{equation}\label{eq:complexmodelit}
        \tilde C_j(K) := \sum_{i_j=1}^N q_j^{i_j}\,
        \left( \sum_{i_{j-1},\ldots,i_1} q^{i_{j-1}}_{j-1} \cdots q^{i_1}_1
        \ \mathrm{Call}\!\left(K^{i_j} \cdots K^{i_1},\, K;\, dV^{i_j}_j+\cdots dV^{i_1}_1 \right) 
        \right) \ .
    \end{equation}
\end{proposition}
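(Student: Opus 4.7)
The statement is just an unrolling of the recursion in \eqref{eq:tildeC1}--\eqref{eq:tildeC1j}, so the natural plan is a short induction on $j$. The only subtlety is that the recursion generates a factor $K^{i_j}$ outside a Black--Scholes call whose spot argument is $K^{i_1}$ and whose strike argument is a rescaled $K$; to get the claimed form these external factors must be absorbed into the spot argument. I therefore intend to prove, by induction on $j\ge 1$, the slightly strengthened identity
\begin{equation*}
    \tilde c_j(K;v) \;=\; \sum_{i_j,\ldots,i_1} q_j^{i_j}\cdots q_1^{i_1}\; \mathrm{Call}\!\Bigl(K^{i_j}\cdots K^{i_1},\, K,\; dV^{i_j}_j+\cdots+dV^{i_1}_1+v\Bigr),
\end{equation*}
valid for every $v\ge 0$. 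Setting $v=0$ then yields \eqref{eq:complexmodelit} via $\tilde C_j(K)=\tilde c_j(K,0)$ (the $j{+}1$ in \eqref{eq:tildeCTK} being read as a mis-shift).

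\textbf{Base case.} For $j=1$, the strengthened identity is exactly the definition \eqref{eq:tildeC1}, so there is nothing to prove.

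\textbf{Inductive step.} Assuming the identity for $j{-}1$ and all $v\ge 0$, I plug the inductive hypothesis \emph{with argument} $(K/K^{i_j},\,dV^{i_j}_j+v)$ into the recursion \eqref{eq:tildeC1j}. This yields
\begin{equation*}
    \tilde c_j(K;v) \;=\; \sum_{i_j}q_j^{i_j}\,K^{i_j}\!\!\sum_{i_{j-1},\ldots,i_1}\! q_{j-1}^{i_{j-1}}\cdots q_1^{i_1}\; \mathrm{Call}\!\Bigl(K^{i_{j-1}}\cdots K^{i_1},\, K/K^{i_j},\; dV^{i_j}_j+\sum_{\ell<j} dV^{i_\ell}_\ell+v\Bigr).
\end{equation*}
The key observation is the positive $1$-homogeneity of the Black--Scholes price in the $(s,k)$ pair, $\lambda\,\mathrm{Call}(s,k,v)=\mathrm{Call}(\lambda s,\lambda k,v)$ for $\lambda>0$, which is immediate from \eqref{eq:bsc} since $d_\pm$ depends on $s,k$ only through $\ln(k/s)$. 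Applying this with $\lambda=K^{i_j}$ absorbs the external $K^{i_j}$ into the spot argument and simultaneously rescales $K/K^{i_j}$ back to $K$, giving
\begin{equation*}
    K^{i_j}\,\mathrm{Call}\!\bigl(K^{i_{j-1}}\cdots K^{i_1},\, K/K^{i_j},\,\cdot\bigr) \;=\; \mathrm{Call}\!\bigl(K^{i_j}K^{i_{j-1}}\cdots K^{i_1},\, K,\,\cdot\bigr),
\end{equation*}
which is exactly what the induction requires. Collapsing the nested sums into a single sum over $(i_1,\ldots,i_j)$ completes the step.

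\textbf{Main obstacle.} No real difficulty arises. The only place where care is needed is bookkeeping: the auxiliary $v$ in the strengthened claim must be carried through the induction so that the recursion's $dV^{i_j}_j+v$ has the right shape to match the hypothesis, and one must recognise early that the Black--Scholes homogeneity used above is the right tool to eliminate the spurious product $K^{i_2}\cdots K^{i_j}$. Both are routine once the strengthened statement is written down.
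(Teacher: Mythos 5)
Your proof is correct, and it matches the approach the paper implicitly intends: the paper omits a proof for this proposition entirely (treating it as a mechanical unrolling of the recursion), and your induction is precisely that unrolling made explicit. The two things you correctly identified as the load-bearing ingredients — strengthening the claim to carry the auxiliary $v$ through the induction, and invoking the degree-$1$ homogeneity $\lambda\,\mathrm{Call}(s,k,v)=\mathrm{Call}(\lambda s,\lambda k,v)$ (valid because $d_\pm$ in~\eqref{eq:bsc} depends on $s,k$ only through $\ln(k/s)$) to absorb the external factor $K^{i_j}$ — are exactly what is needed, and you also rightly flagged the apparent index shift in~\eqref{eq:tildeCTK} as a typo ($\tilde C_j(K)=\tilde c_j(K,0)$). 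An equally short alternative would be to derive the formula directly from the stochastic representation~\eqref{eq:Zcomplex}, conditioning on $(X_1,\ldots,X_j)$ and using that the conditional product of independent log-normals is log-normal with variance $\sum_\ell dV_\ell^{i_\ell}$; this gives the same display without the inductive bookkeeping, but buys nothing substantive over your argument.
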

\noindent
This formula is used in our implementation in section~\ref{sec:fullyconsistent}
on page~\pageref{sec:fullyconsistent} to iteratively fit the model to 
market data from the earliest expiry towards the latest expiry (the implementation allows
for different strikes per expiry; we omitted this here for sake of notational clarity).

As pleasing as this result is theoretically it is numerically more more
expensive to implement than our simple~\eqref{eq:CasSumC}
as we need to calculate the full tensor structure on the right hand side
of~\eqref{eq:complexmodelit} while iterating forward.

\begin{remark}[Extension to Martingale Densities]
    Equation~\eqref{eq:Zcomplex} shows that our representation means that the
    jump process~$X_j$ is independent of~$X_{j-1}$. We assumed this for computational
    efficiency, but it is noteworthy that we can also apply our idea to more general
    martingales~$X$ as follows: assume $q_1,\ldots,q_M$ is
    a martingale density with transition operators~$Q_{j|j-1}$. Define then
    \begin{equation}\label{eq:complexmodelitcond}
        \breve C_j(K) := \sum_{i_j=1}^N \,K^{i_j}
        \left( \sum_{i_{j-1},\ldots,i_1} Q_{j|j-1}^{i_j|i_{j-1}} Q_{j-1|j-2}^{i_{j-1}|i_{j-2}} \cdots Q^{i_1|0}_{1|0}
        \ \mathrm{Call}\!\left( K^{i_j},\,K;\, dV^{i_j}_j+\cdots dV^{i_1}_1 \right) 
        \right) \ .
    \end{equation}
    Then,~$\breve C$ has the martingale representation
    $$
        Z_j = X_j \prod_{\ell=1}^j Y_\ell^{X_\ell} \ .
    $$
\end{remark}    

\noindent
It is notable that this approach has a similar expression in a classic diffusion setting:

\begin{remark}[Generalization to a Diffusion Setting] Assume that
$$
    \frac{ dY^x_t }{ Y^x_t} = u_t(x) dW_t 
    \ \ \ \mbox{and} \ \ \ 
    \frac{ d X_t }{ X_t } = v(X_t) dB_t 
$$
are both positive martingales; then
$$
    \frac{ dZ_t }{ Z_t } = u_t(X_t) dW_t
$$
is a well-defined martingale. We then have
$$
     d(X_t Z_t)= X_t u_t(X_t) Z_t dW_t + Z_t v(X_t) X_t dB_t = (X_t Z_t) \left( u_t(X_t)dW_t + v(X_t)dB_t\right)
$$
which shows that $XY$ is also a positive martingale (albeit not Markov,
and correlated to~$X$). This idea can be trivially extended to processes with jumps.
\end{remark}

\section{Implementation}\label{sec:implementation}

We now discuss the practical implementation of fitting our model variants to observed market data. We will start with the
simplest variant and work our way to the fully consistent model of theorem~\ref{th:fullsmooth}. For both our production
and generalized model we provide the ability to specify different strikes per expiry.

\subsection{Homogeneous Strikes - Toy Model}

 To illustrate the general mechanics we start by providing a very simple algorithm for
the much simpler problem of fitting a model with homogeneous strikes to a market with the same
expiries and strikes. This is meant to be explain the basic ideas as it does not suffer from
too many indices.
In the the following section~\ref{sec:PROD} we present a more practical 
approach which handles inhomogeneous strikes etc. The reader may skip directly to that section.

Assume that we have strikes $0<K^1<\cdots<1<\cdots<N$ and expiries $0<T_1<\cdots<T_M$
with observed market mid-prices $C^i_j$. We also assume we have specified
 Black-Scholes implied variances $V_j$ which are increasing in expiries, $V_j \geq V_{j-1}$ and let~$\eta\in[0,1)$ be a variance factor, e.g.~$\eta=0.25$.
Define the matrix $\mathbf{C}_j\in \R^{ (N-2) \times N}$ for $j=1,\ldots,M$ as
\begin{equation}\label{eq:Cdef}
    \mathbf{C}_j^{\ell,i} := \mathrm{Call}( K^i, K^\ell, \eta V_j ) \ .
\end{equation}
for $\ell=2,\ldots,N-1$ and $i,\ell=1,\ldots,N$. This matrix maps the our model prices given as a linear function of our density $q_j$ to market prices.
Let also
$$
     \mathbf{U}^{\ell,i} := ( K^i - K^\ell )^+ 
$$
which are the maps from~$q$ to discrete call prices of the $q$'s themselves. We will use this to ensure
that the~$q$'s are increasing in call prices.

The candidate call prices based on $m$ for strikes $K$ at expiry $T_j$ are per~\eqref{eq:prod_model} given in terms of a density~$q_j$ as
$$
     \mathbf{c}_j := \mathbf{C}_j \cdot  q_j  \in \R^{N-2} 
$$
where as before ``$\cdot$'' denotes the classic matrix/vector product.

The fitting problem for our model can then be written as the linear program
\begin{equation}\label{eq:prob_homog}
\left\{
    \begin{array}{lll}
        \mbox{\textsl{Variables:}}
        \\
        \hspace{0.1cm}q_1,\ldots,q_M\ \mbox{with} \ q_j \in \R^{N} & \mbox{Marginal densities}\\
        \\
        \mbox{\textsl{Slack variables:}}
        \\
        \hspace{0.1cm}\mathbf{c}_j := \mathbf{C}_j \cdot q_j  \in \R^{N-2}  & \mbox{Prices at market strikes}\\
        \hspace{0.1cm} \mathbf{u}_j :=  \mathbf{U}_j \cdot q_j  \in \R^{N} &
        \mbox{Call prices for~$q_j$}\\
        \\
        \mbox{\textsl{Constraints:}}\\
           \hspace{0.1cm} q_j \geq 0 , \ 1'\cdot q_j = 1,  \ K'\cdot q_j = 1 & \mbox{Marginal 
            density with unit mean}\\
           \hspace{0.1cm}\mathbf{u}_j  \geq \mathbf{u}_{j-1}
        & \mbox{Martingale condition} \\  \\
        \mbox{\textsl{Objective:}} \\
        \hspace{0.1cm}\inf q_1,\ldots,q_M:\  \sum_j \left| w_j \cdot (C_j -\mathbf{c}_j)  \right|  
        & \mbox{Weighted fit to market}
    \end{array}
    \right.  \tag{SMP}
\end{equation}
Our example implementation we provided computes the tensor expressions in every iteration. This
means the algorithm has quadratic execution time and is in its current form not suitable
for production use.

\subsection{Production Model for Inhomogeneous Grids}\label{sec:PROD}

In practice the strikes in the market are not constant in time, in particular not when normalized by the forward
into ``pure'' strikes. When using real market data, it is also sometimes
necessary to add additional strikes when market strikes are far away from each other
(e.g.~after filtering by minimum quoted volumes). The code we used to test our
model adds strikes whenever market strikes are further apart than some maximum~$dx$,
and adds additional strikes outside the observed strike range.

Therefore, assume first that we are given boundary strikes $0<K^{\min}<1<K^{\max}$ which lie
outside any observed market prices. Assume further that we are given: 

\begin{itemize}
\item 
    \textbf{Market Expiries} $0<\tau_1<\cdots<\tau_m$.
\item 
    \textbf{Market Strikes}:
    For each market expiry we are given $n_\ell$ strikes $k^1_\ell,\ldots,k^{n_\ell}_\ell$ with $0<K^{\min}\ll k_\ell^1<\cdots<1<\cdots <
    k_\ell^{n_j} \ll K^{\max}$.

\item
    \textbf{Market Prices}:
    For each $k^r_\ell$ we are given a target price $C_\ell^r$, an ask price $A_\ell^r$
    and a bid prices $B_\ell^r\geq 0$. We assume that the spread is strictly positive: $A_\ell^r>B^r_\ell$.

\item 
    \textbf{Weights}: For each option we are given a weight $w^r_\ell\in\R_{\geq 0}$,  typically the inverse of the prevailing bid/ask spread,  
    or the inverse of Vega to approximate a fit in implied volatilities.

\item
    \textbf{Market ATM Variances}: we assume we are given market ATM log-normal variances
    $0<v_1<\cdots<v_m$.
\end{itemize}

\noindent
The model is itself defined as before:
\begin{itemize}
\item 
    \textbf{Model Expiries}: $0<T_1<\cdots<T_M$.

\item  
    \textbf{Model Strikes}: Each model expiry has $N_j$ strikes $0=K^{\min}=K^1_j<\cdots<1<\cdots<K^{N_j}_j=K^{\max}$
    which may or may not include
    any market strikes.

\item  
    \textbf{Implied Variances} we also assume we have an
    increasing set of implied variances $0<V_1<\cdots<V_m$
    and associated scaling factor~$\eta\in[0,1)$, for example
    $\eta=0.25$. If $\eta$ is zero, then the model becomes linear. Figure~\ref{fig:eta}
    in page~\pageref{fig:eta} illustrates
    the impact of~$\eta$.
\end{itemize}

As a first step we identify the location of the market expiries with respect to the model expiries.
To this end, we interpolate model prices using some weighting function~$\alpha$ as in~\eqref{eq:intT}.
That means that the model prices for expiry~$\tau_\ell$ can be written
as convex combination of prices at the two surrounding model expiries $T_{j(\ell)-1}<\tau_\ell\leq T_{j(\ell)}$ as
$$
    C^r_\ell = \alpha_\ell\, \mathrm{Call}( K_{j(\ell)},k^r_\ell,\eta v_\ell)' \cdot q_{j(\ell)}
             + (1-\alpha_\ell)\,  \mathrm{Call}( K_{j(\ell)-1},k^r_\ell,\eta v_\ell)' \cdot q_{j(\ell)-1}
$$
with weights~$\alpha_\ell := \alpha_{j(\ell)}(\tau_\ell)$
if $V_{j(\ell)-1}\leq v_\ell \leq V_{j(\ell)}$.
We therefore pre-compute matrices $\mathbf{C}_{\ell|+} \in \R^{n_{j(\ell)}\times N_j}$
and $\mathbf{C}_{\ell|-} \in \R^{n_{j(\ell)-1}\times N_j}$
as
\begin{equation}\label{eq:imp_matrixes_market}
\begin{array}{llrl}
    \mathbf{C}_{\ell|+}^{r,i} & := & \alpha_\ell & \mathrm{Call}( K^i_{j(\ell)}\, k^r_\ell, \eta v_\ell )  \ \mbox{and} \\
    \mathbf{C}_{\ell|-}^{r,i} & := & (1-\alpha_\ell ) & \mathrm{Call}( K^i_{j(\ell)-1}\, k^r_\ell, \eta v_\ell )  \ . \\
\end{array}
\end{equation}

In order to ensure that call prices are increasing in time 
we also introduce the matrices $\mathbf{U}_j\in\R^{N_j,N_j}$ and $\mathbf{R}\in\R^{N_{j+1},N_j}$
given as
\begin{equation}\label{eq:imp_matrixes}
\begin{array}{lll}
    \mathbf{U}_j^{\ell,i} & := & \mathrm{Call}( K^i_j, K^\ell_j, \eta\omega V_j ) \ \mbox{and} \\
    \mathbf{R}_j^{\ell,i} & := & \mathrm{Call}( K^i_{j-1}, K^\ell_j, \eta\omega V_{j-1} ) \ .
\end{array}
\end{equation}
The variable~$w\in\{0,1\}$ is explained below.

Our linear program then becomes:
\begin{equation}\label{eq:HMOG}
\left\{
    \begin{array}{lll}
        \mbox{\textsl{Variables:}}
        \\
        \hspace{0.1cm}q_1,\ldots,q_M\ \mbox{with} \ q_j \in \R^{N_j} & \mbox{Marginal densities}\\
        \\
        \mbox{\textsl{Slack variables for $\ell=1,\ldots,m$:}}\\
        \hspace{0.1cm}\mathbf{c}_\ell := \mathbf{C}_{\ell|+} \cdot q_{j(\ell)} + \mathbf{C}_{\ell|-} \cdot q_{j(\ell)-1} \in \R^{n_j} & \mbox{Prices at market strikes}\\
        \\
        \mbox{\textsl{Slack variables for $j=1,\ldots,M$:}}\\
        \hspace{0.1cm}\mathbf{u}_j :=  \mathbf{U}_j \cdot q_j  \in \R^{N_j}  & \mbox{Prices at model strikes}\\
        \hspace{0.1cm}\mathbf{r}_j := \mathbf{R}_j \cdot q_{j-1} \in \R^{N_j} & \mbox{Previous model prices at current strikes}\\
        \\
        \mbox{\textsl{Constraints:}} \\
            \hspace{0.1cm}q_j \geq 0 , \ 1'\cdot q_j = 1,  \ K'\cdot q_j = 1 & \mbox{Marginal 
            density with unit mean}\\
            \hspace{0.1cm}\mathbf{u}_j \geq \mathbf{r}_j & \mbox{Martingale 
            condition $(*)$}\\ \\
        \mbox{\textsl{Objective:}} \\
        \hspace{0.1cm}\inf q_1,\ldots,q_M:\ \sum_\ell \left| w_\ell\cdot (C_\ell -\mathbf{c}_\ell)  \right| 
        & \mbox{Weighted fit to mid}
    \end{array}
    \right. \tag{MDL}
\end{equation}

In order to ensure absence of arbitrage in time condition~$(*)$ must be
satisfied with $\omega=0$ as this ensures that the discrete strike marginal
densities constitute a martingale density, c.f.~theorem~\ref{th:noarb}
on page~\pageref{th:noarb}.
However, as noted there this is in fact just a sufficient but not necessary
condition for absence of arbitrage. From a practical and numerical perspective it is also more
natural to impose the price condition in observed price space, which
is the case for~$w=1$. However, this does then not guarantee strict absence of arbitrage.

\begin{remark}[Bid/Ask Spreads]
    Actual markets do not have a mid-price to fit to; instead we observe bid $B_j^i$ and
    ask prices $A_j^i$ which are a positive spread~$A_j^i - B_j^i$ apart.

    To incorporate bid/ask's, there are a number of trivial approaches to account for bid/ask spreads.
    \begin{enumerate}
        \item Normalize the fitting error by scaling by the observed bid/ask spread i.e.
        \begin{equation}\label{eq:bidaskweight}
            w^i_j := \frac1{ A^i_j - B^i_j } \ .
        \end{equation}
        
        \item enforce the fitted prices to be within bid/ask by adding the constraints $B_j \leq C_j \leq A_j$. This yields the additional linear constraints
        $$
        \left\{
            \begin{array}{lll}
            \mbox{\textsl{Additional Constraints:}} \\
            \hspace{0.1cm}\max\{ C_j - A_j, 0 \} = 0 \\
            \hspace{0.1cm}\max\{ B_j - C_j, 0 \} = 0 \ . \\
            \end{array}\right.
        $$

        \item Increase the fitting penalty when the fitted price is outside bid/ask.
        $$
        \left\{
            \begin{array}{lll}
            \mbox{\textsl{Alternative Objective:}} \\
             \hspace{0.1cm}\inf_{q_1,\ldots,q_M}:\ \sum_j  w_j \Big( 
                \epsilon | C_j - \mathbf{c}_j | + \max\{ C_j - A_j, 0 \} + \max\{ B_j - C_j, 0 \} \Big) \ .
            \end{array}\right.
        $$
        This is our recommended default setting with~$\epsilon=10^{-8}$ with
        using weights~\eqref{eq:bidaskweight}.
        \end{enumerate}
    All of these fit into a linear or quadratic programming framework. 
\end{remark}

\begin{figure}[H]
    \centering
    \includegraphics[width=0.9\linewidth]{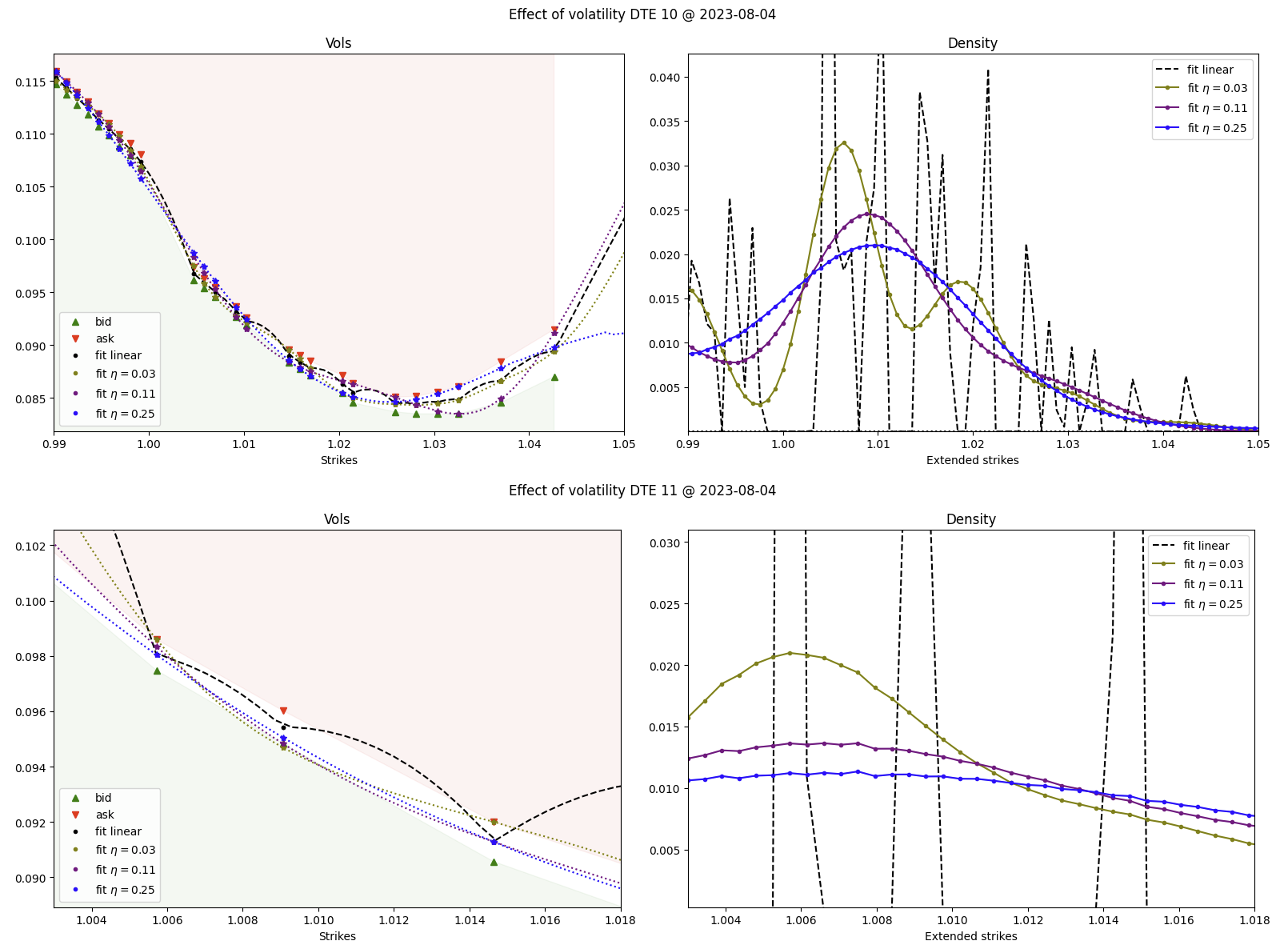}
    \caption{These two examples illustrate the impact of the smoothness parameter~$\eta$ 
    in a zoomed in view on a fit. The left hand graphs show the implied
    volatility bid/ask, and the model fits. The right hand graph shows the density produced by each model setting. ``Linear'' refers to the case~$\eta=0$.
    The implied variances~$V_j$ used in the model are ATM bid implied volatilities.
    The lower example in particular illustrates well the balancing act between the
    usefulness of a smooth surface vs a good fit: the left hand volatility
    graph show that the black dashed line from the linear model fits perfectly at the mid point within bid/ask at market strikes, as expected.
    However, the dashed line also shows the effect of linear interpolation
    being ``too expensive'' between market strikes as the implied volatilities bump upwards.
    In contrast, the models with non-zero smoothness provide much more natural fits 
    the market. The smoothest model has some fitting error, but also provides
    a~more natural interpolation across strikes. 
    }
    \label{fig:eta}
\end{figure}

\subsection{Generalized Model}\label{sec:fullyconsistent}

The final step is to implement fitting for our generalized model as described in theorem~\ref{th:fullsmooth}.
To lighten notation we assume that model expiries match market expiries; an extension in the spirit
of the previous section is straight forward.
Equation \eqref{eq:Zcomplex}
on page~\pageref{eq:Zcomplex}
can be written as
$$
    \tilde C_j(K) := \sum_{i_j=1}^{N_j} q_j^{i_j} \ \left( 
            \sum_{i_{j-1},\ldots,i_1=1}^{N_{j-1},\ldots,N_1}
                q_{j-1}^{i_{j-1}} \cdots q_1^{i_1}\  \cdots K^{i_1}_1\ \mathrm{Call}\left( K^{i_j}_j\,K^{i_{j-1}}_{j-1}\, ;\ 
                    K^\ell_j\, ;
                \  dV^{i_j}_j+\cdots+dV^{i_1}_1 \right) 
            \right) \ .
$$
That means that in addition to the input assumptions of the previous section 
we will also assume that for each $j$ there is a vector~$dV^1_j,\ldots,dV^{N_j}_j$
of model incremental variances each corresponding to one of
the log-normal increments~$Y^i_j$ in theorem~\ref{th:fullsmooth}.

We will now present an iterative fitting scheme which fits each step using linear programming
going forward along expiries.

Define the tensors
$$
\begin{array}{llllll}
    {\mathbb V}_j & := & dV_j \oplus \cdots \oplus dV_1 & = & ( dV^{i_j}_j+\cdots+dV^{i_1}_1 )_{i_j,\ldots,j_1} & \in \R^{N_j\times \cdots \times N_1}\\
    \K_j & := & K_j \otimes \cdots \otimes K_1 & = &  ( K^{i_j}_j \cdots K^{i_1}_1 )_{i_j,\ldots,j_1} & \in \R^{N_j\times \cdots \times N_1}\\
    \Q_j & := & q_j \otimes \cdots \otimes q_1 & = & ( q^{i_j}_j \cdots q^{i_1}_1 )_{i_j,\ldots,j_1} & \in \R^{N_j\times \cdots \times N_1}
\end{array}
$$
and then
$$
\begin{array}{llll}
    \mathbf{C}_j 
    & := & \mathrm{Call}\left( \K_j,\, k_j,\,\eta {\mathbb V}_j \right) & \in \R^{n_j \times N_j \times \cdots \times N_1}\\
    \mathbf{U}_j 
    & := & \left( \K_j - K_j \right)^+ & \in \R^{N_j \times N_j \times \cdots \times N_1}\\
    \mathbf{R}_j 
    & := & \left( \K_{j-1} - K_j \right)^+ & \in \R^{N_j \times N_{j-1} \times \cdots \times N_1}
\end{array}
$$
where indexing is as implied by the dimensionality of each tensor.

We note that provided~$\Q_j$ the inner product
$$
    \mathbf{c}_j := \mathbf{C}_j \cdot \Q_j' \in \R^{n_j}
$$
gives the model prices for the market strike at the $j$th expiry.
However,~$\Q_j$ contains products of all the vectors~$q_j$ such that above is no longer
linear or quadratic.\\

\noindent
\textbf{Iterative Linear Programming:} we therefore propose the following scheme: assume that 
we have found $q_1,\ldots,q_{j-1}$ and therefore $\Q_{j-1}$, and that we are now 
looking to fit the marginal density for
the $j$th expiry, $q_j\in [0,1]^{N_j}$.

\begin{equation}\label{eq:GENR}
\left\{
    \begin{array}{lll}
        \mbox{\textsl{Variables:}}
        \\
        \hspace{0.1cm}q_j \in \R^{N_j} & \mbox{Marginal density}\\
        \\
        \mbox{\textsl{Slack variables:}}\\
        \hspace{0.1cm}\mathbf{c}_j := ( \mathbf{C}_j \cdot \Q_{j-1}' )\cdot q_j \in \R^{n_j} & \mbox{Prices at market strikes}\\
        \hspace{0.1cm}\mathbf{u}_j := ( \mathbf{U}_j \cdot \Q_{j-1}' )\cdot q_j  \in \R^{N_j}  & \mbox{Prices at model strikes}\\
        \hspace{0.1cm}\mathbf{r}_j := ( \mathbf{R}_j \cdot \Q_{j-1}' ) \in \R^{N_j} & \mbox{Previous model prices at current strikes}\\
        \\
        \mbox{\textsl{Constraints:}} \\
            \hspace{0.1cm}q_j \geq 0 , \ 1'\cdot q_j = 1,  \ K'\cdot q_j = 1 & \mbox{Marginal 
            density with unit mean}\\
            \hspace{0.1cm}\mathbf{u}_j \geq \mathbf{r}_j & \mbox{Martingale 
            condition}\\ \\
        \mbox{\textsl{Objective:}} \\
        \hspace{0.1cm}\inf q_j:\ \left| w_j\cdot (C_j -\mathbf{c}_j)  \right| 
        & \mbox{Weighted fit to market}
    \end{array}
    \right. \tag{GNR}
\end{equation}

\section{Example}

The following figures~\ref{fig:fullfit}ff illustrate the impressive performance of our algorithm on real data 
sourced from Option Metric's Ivy DB data based via WRDS:
we fitted 1000 options within 2 ATM implied volatility standard deviations which had a $\mathrm{Vega}\sqrt{T}$ of at least~0.1\%
on 2025-05-06 across all 48 expiries from~1 to~657 business days. Options were chosen by closeness to ATM. Time to expiry
for AM settled options is computed as business days until the day before plus~20\%.
The model fitted 91.4\% of all options within bid/ask.
Of those options not fitted the median error is just 21\% of half spread. The fit took sub-one second on a desktop PC.

\begin{figure}[H]
    \includegraphics[width=0.9\linewidth]{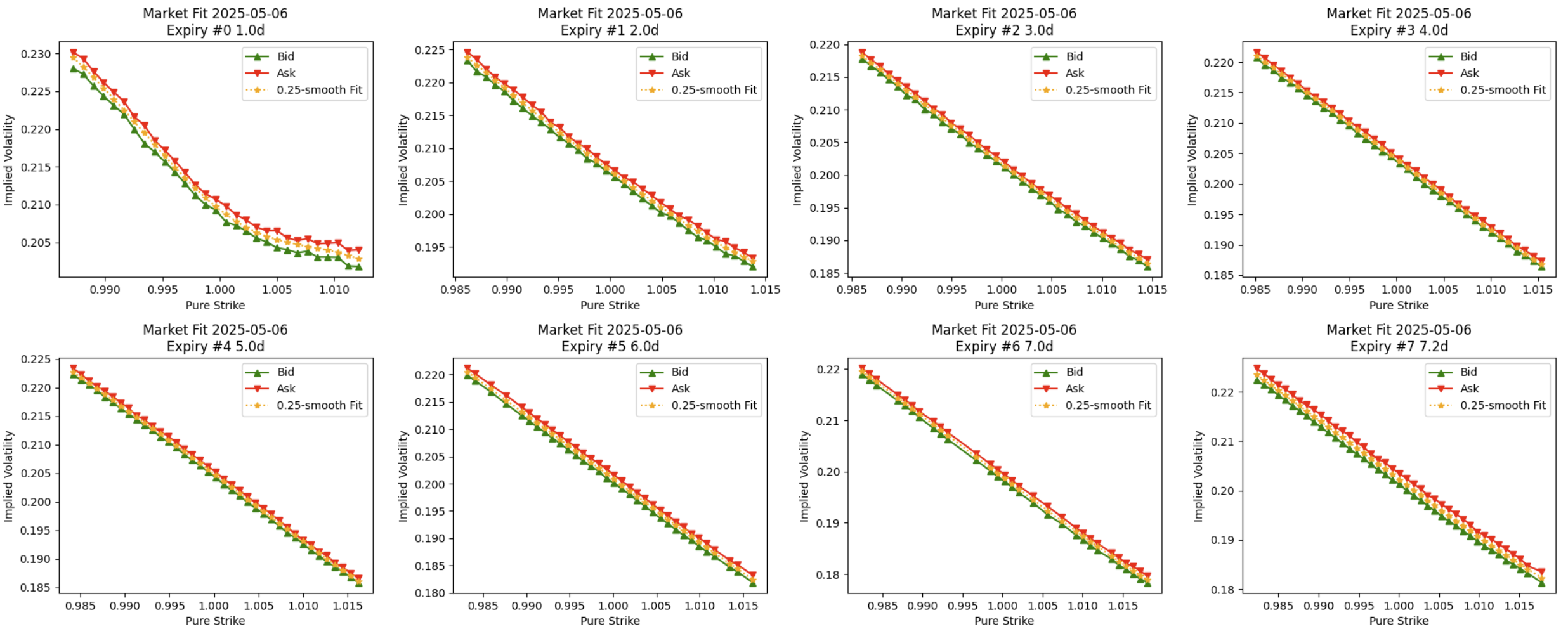}
    \caption{Fit to SPX on 2025-05-06 (1/6).
    \label{fig:fullfit}
    }
\end{figure}
\begin{figure}[H]
    \includegraphics[width=0.9\linewidth]{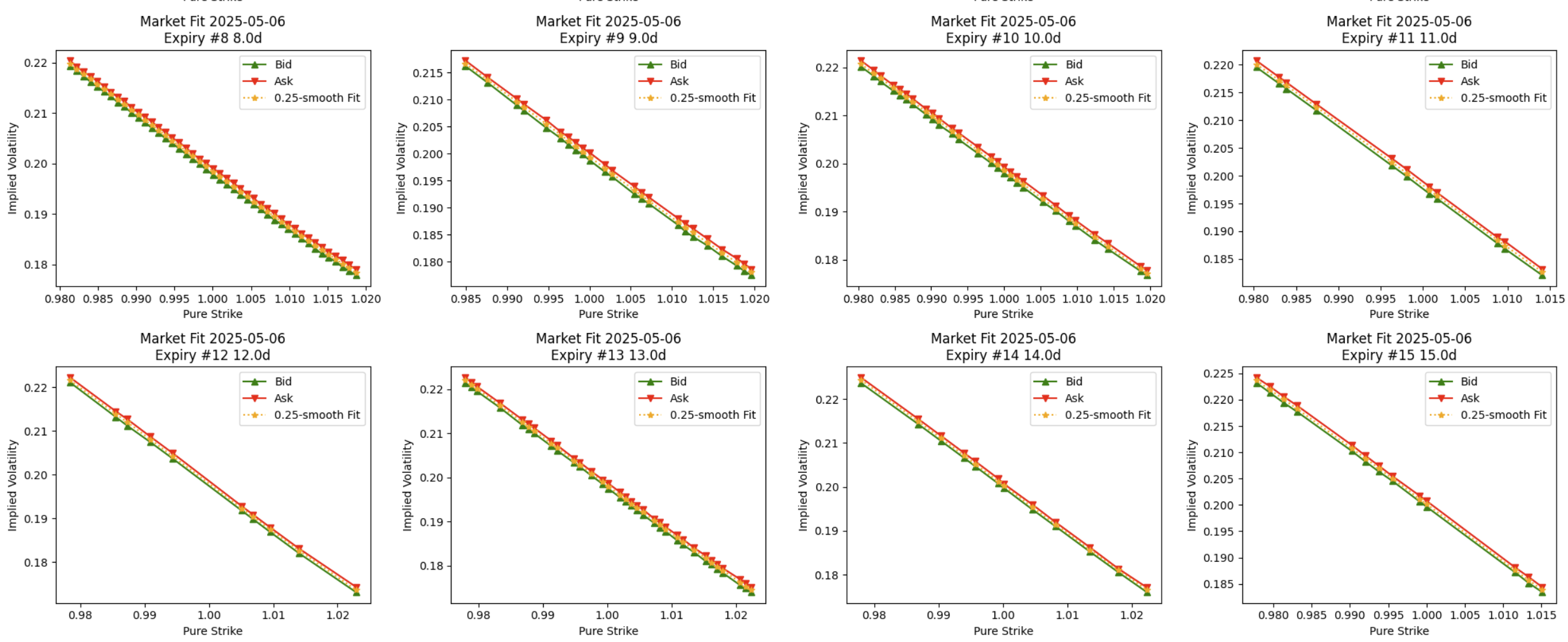}
    \caption{Fit to SPX on 2025-05-06  (2/6).
    }
\end{figure}
\begin{figure}[H]
    \includegraphics[width=0.9\linewidth]{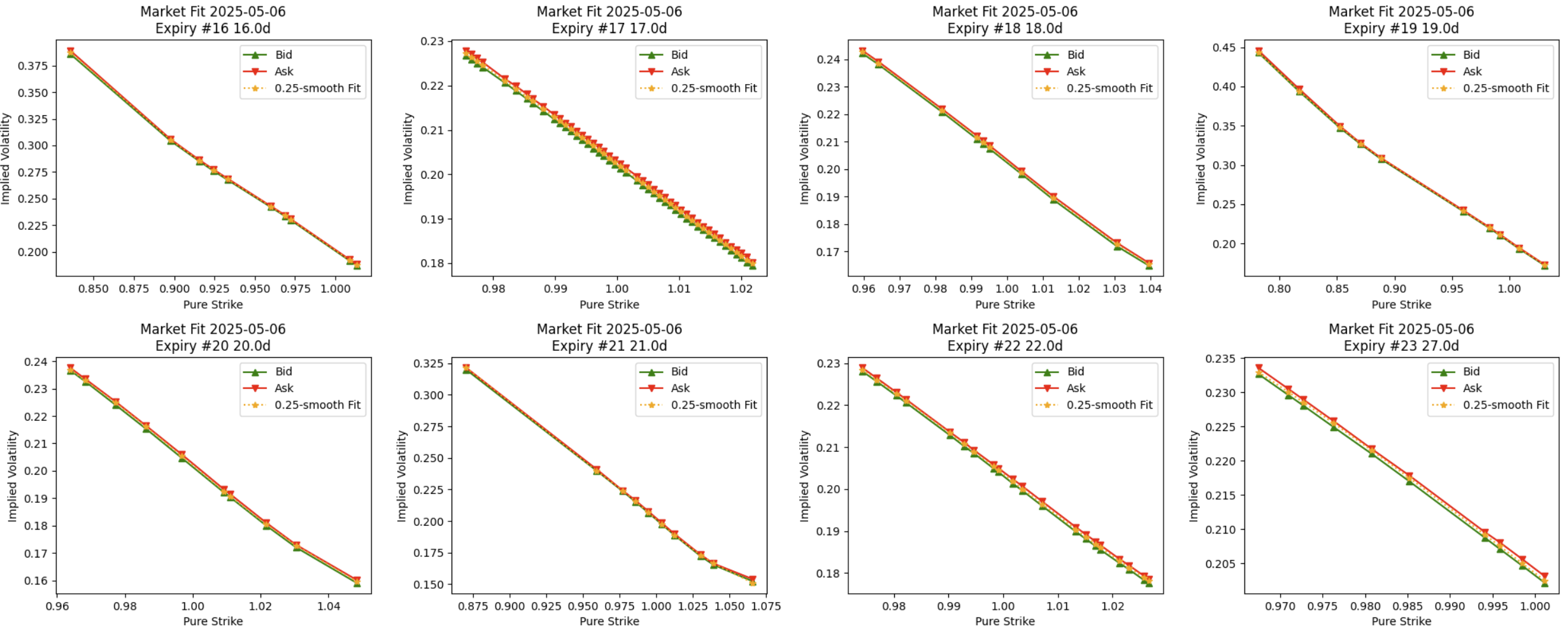}
    \caption{Fit to SPX on 2025-05-06  (3/6).
    }
\end{figure}
\begin{figure}[H]
    \includegraphics[width=0.9\linewidth]{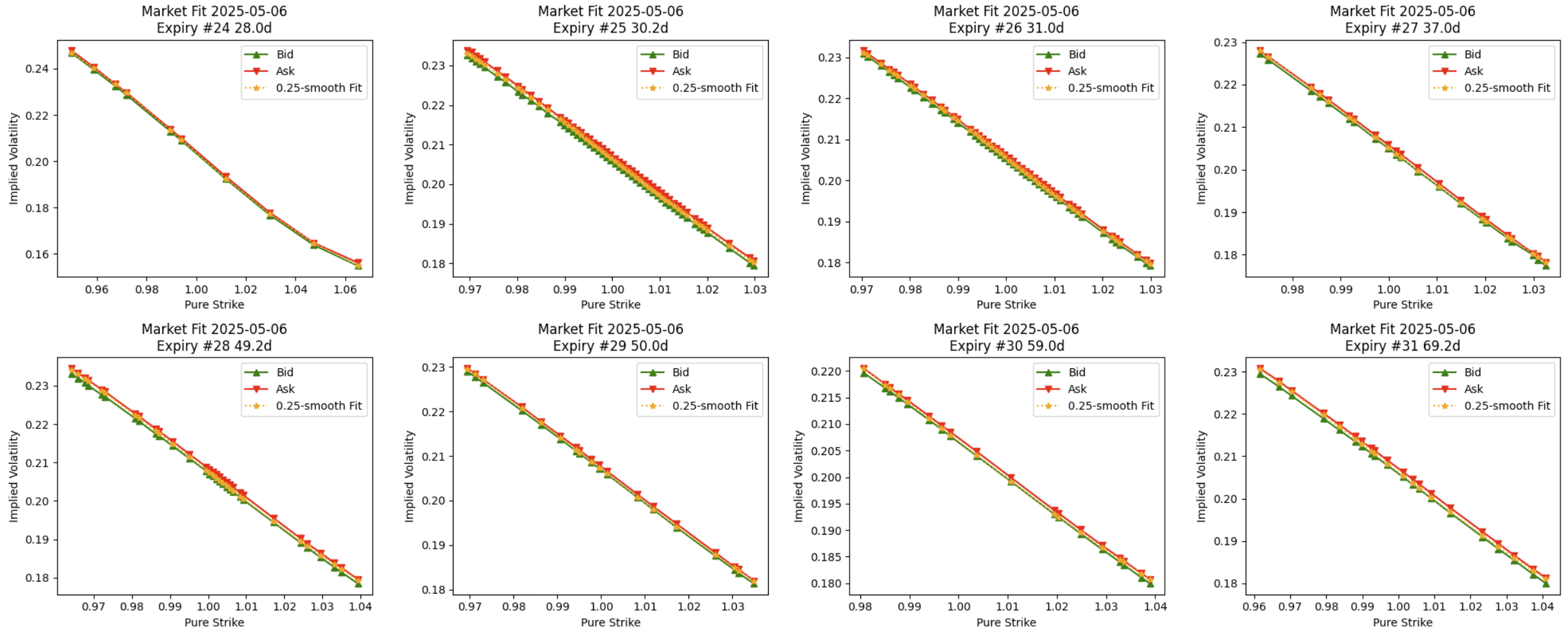}
    \caption{Fit to SPX on 2025-05-06  (4/6).
    }
\end{figure}
\begin{figure}[H]
    \includegraphics[width=0.9\linewidth]{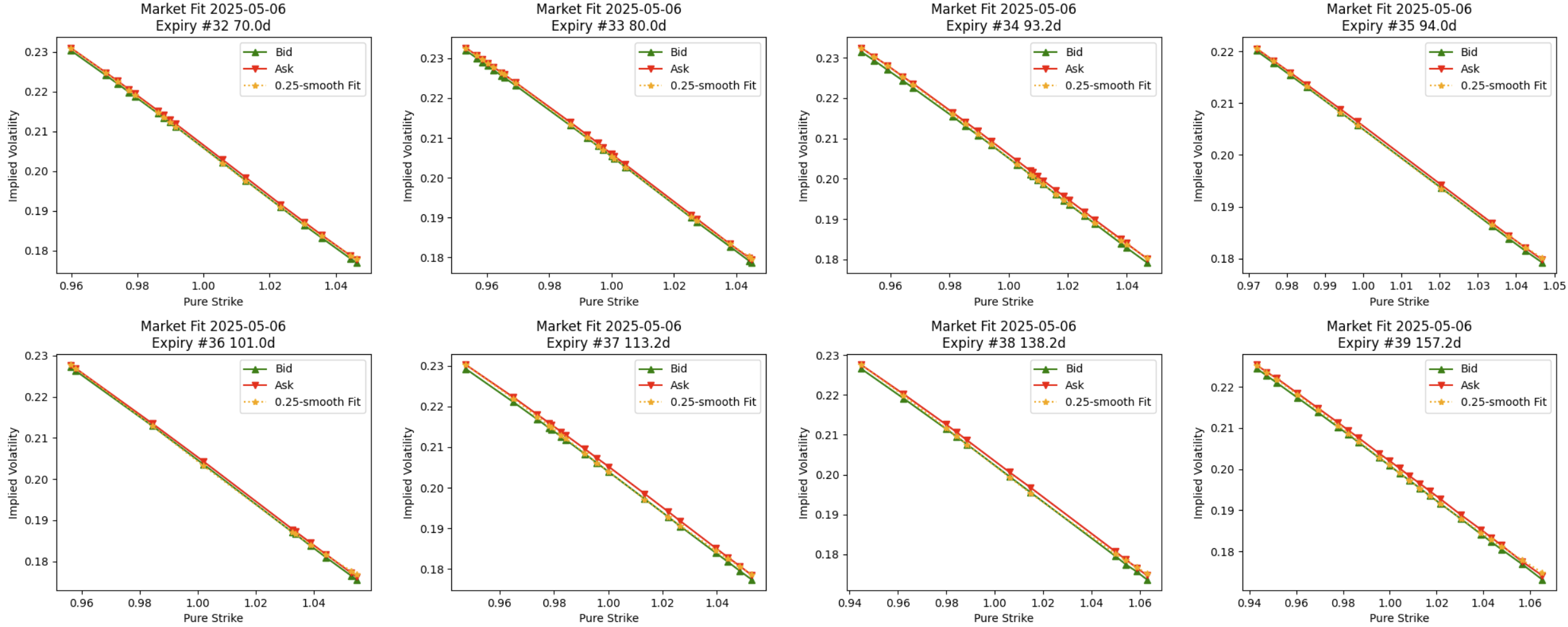}
    \caption{Fit to SPX on 2025-05-06  (5/6).
    }
\end{figure}
\begin{figure}[H]
    \includegraphics[width=0.9\linewidth]{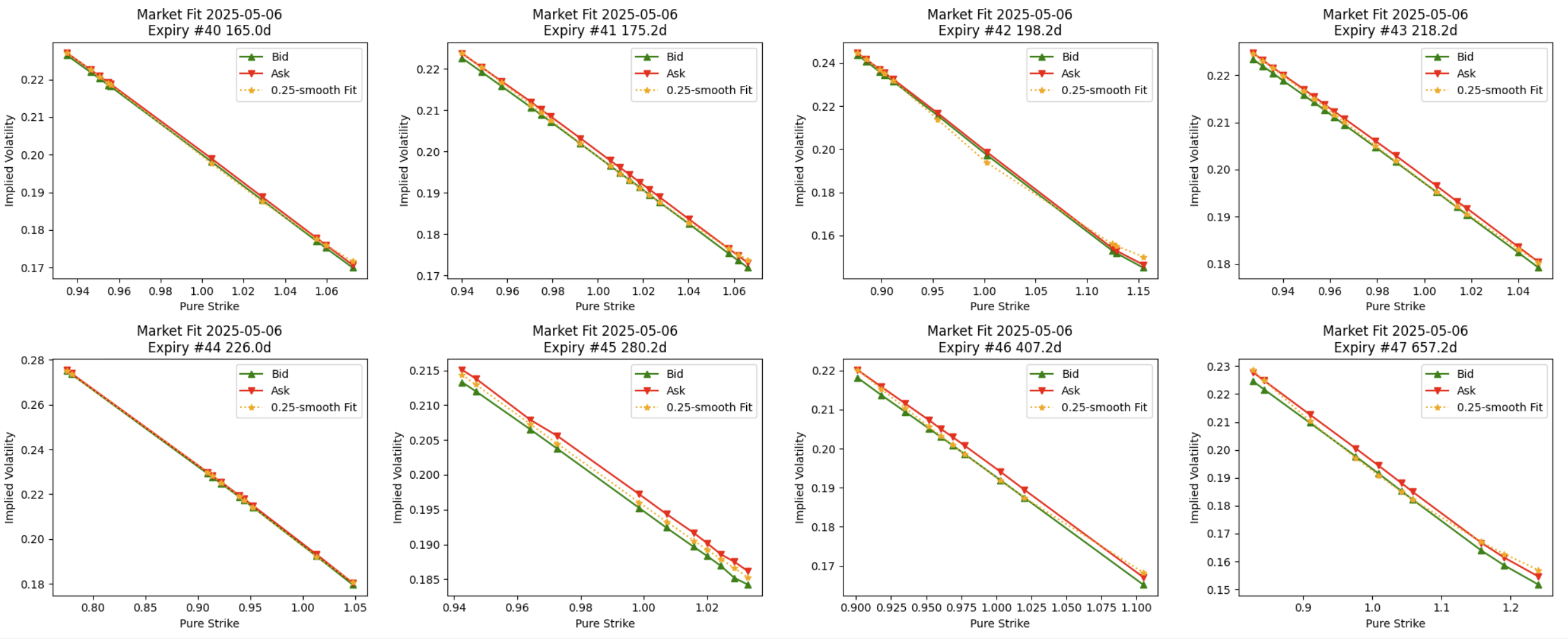}
    \caption{Fit to SPX on 2025-05-06  (6/6).
    }
\end{figure}

\section{Conclusion}

We have presented a smooth strictly arbitrage-free parametrization of an option price surface and shown how to efficiently fit
such model -- and its simpler variant -- to market data, illustrating the broad applicability of the method. We have also shown how
to incorporate bid/ask spread into our fit. 
Our models can be written in terms of ``volatility'' parameters
which only have to be positive in order to make the model smooth and strictly arbitrage-free. We presented an interpretation
as a dynamic process to aid intuition behind our approach.

Our approach is a natural extension of linear option surface fitting whenever smoothness is critical to the application.

\section*{Funding}
A.\ Kratsios acknowledges financial support from an NSERC Discovery Grant No.\ RGPIN-2023-04482 and No.\ DGECR-2023-00230.   and by the project Bando PRIN 2022 named ``Qnt4Green - Quantitative Approaches for Green Bond Market: Risk Assessment, Agency Problems and Policy Incentives'', codice 2022JRY7EF, CUP E53D23006330006, funded by European Union – NextGenerationEU, M4c2.
A.\ Kratsios and R. Saqur were also supported, in part, by the Province of Ontario, the Government of Canada through CIFAR, and companies sponsoring the Vector Institute\footnote{\href{https://vectorinstitute.ai/partnerships/current-partners/}{https://vectorinstitute.ai/partnerships/current-partners/}}.

\appendix

\section{Appendix}

\subsection{Absence of Arbitrage}\label{sec:appendix_absence}

We present a self-contained proof for theorem~\ref{th:contarbfree}  on page~\pageref{th:contarbfree}
which we restate here for convenience:

\begin{theorem}[Theorem \ref{th:contarbfree}]
    A~continuous call price surface $C:[0,\infty)^2 \rightarrow [0,\infty)$ \emph{arbitrage-free} 
    if and only if
    \begin{enumerate}
        \item The market has unit expectation: $C(T,0)\equiv \E[Z_T]=1$ for all $T$;
        \item zero is unattainable: $\partial_K C(T,0)\equiv -1$ since then $\P[ Z_T=0 ] =0$;  
        \item call prices ultimately reach zero: $\lim_{K\uparrow \infty} C(T,K)=0$,
        \item call prices are convex: $\partial_{KK} C(T,K) \geq 0$; and
        \item call prices are increasing in time: $\partial_T C(T,K)\geq 0$.
    \end{enumerate}
\end{theorem}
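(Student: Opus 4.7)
The plan is to prove the two implications separately. The forward direction $(\Rightarrow)$ should be essentially a bookkeeping exercise: assuming $C(T,K)=\E[(Z_T-K)^+]$ for a positive martingale $Z$ with $Z_0=1$, condition (1) follows from $\E[Z_T]=Z_0=1$ and monotone convergence at $K\downarrow 0$; the right derivative in $K$ can be exchanged with the expectation to give $\partial_K C(T,K)=-\P[Z_T>K]$, so (2) is equivalent to $Z_T>0$ almost surely, and (3) follows from dominated convergence with dominating function $Z_T$; condition (4) is Fubini applied to the convexity of $K\mapsto(z-K)^+$; condition (5) is Jensen applied to the convex function $z\mapsto(z-K)^+$, which makes $(Z_T-K)^+$ a submartingale, so $T\mapsto \E[(Z_T-K)^+]$ is non-decreasing.

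For the converse $(\Leftarrow)$ the strategy is classical Breeden--Litzenberger plus a Kellerer-type marginal-gluing theorem. First, for each fixed $T$, the convex non-increasing function $K\mapsto C(T,K)$ determines a Borel measure $\mu_T$ on $[0,\infty)$ via
\[
    \mu_T\bigl((K,\infty)\bigr) := -\partial_K C(T,K),\qquad K\ge 0,
\]
equivalently $\mu_T$ is the second distributional derivative of $C(T,\cdot)$ on $(0,\infty)$. Conditions (2)--(4) imply that $\mu_T$ is a genuine probability measure on $(0,\infty)$ with no atom at the origin and no escape of mass to infinity, while (1) combined with the layer-cake identity $\int_0^\infty \mu_T((K,\infty))\,dK = \int z\,\mu_T(dz)$ pins the expectation to $1$. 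Thus condition (1) gives unit mean, (2) excludes an atom at $0$, and (3) forces $\mu_T$ to be a bona fide probability measure rather than a defective one.

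The second step is to promote the family $(\mu_T)_{T\ge 0}$ to a true martingale. Condition (5) is exactly the statement that the family is non-decreasing in the convex order, that is $\int\phi\,d\mu_T$ is non-decreasing in $T$ for $\phi(z)=(z-K)^+$; a standard approximation argument extends this to all convex $\phi$ of linear growth, so $\mu_S\preceq_{cx}\mu_T$ for $S\le T$. By Kellerer's theorem, a peacock with unit mean admits a Markovian martingale realization $(Z_T)_{T\ge 0}$ with $\mathrm{Law}(Z_T)=\mu_T$ and $Z_0=1$, and since each $\mu_T$ is supported in $(0,\infty)$, $Z$ is positive. Setting $\tilde C(T,K):=\E[(Z_T-K)^+]$ and noting that $-\partial_K \tilde C(T,K)=\mu_T((K,\infty))=-\partial_K C(T,K)$ together with $\tilde C(T,\infty)=C(T,\infty)=0$ yields $\tilde C\equiv C$, completing the representation.

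The main obstacle is the invocation of Kellerer's theorem in the right form: a naive application only gives a martingale with the prescribed marginals on a countable dense set of times, and one must argue (via tightness of the marginals, which follows from the uniform integrability implied by $\sup_T \int z\,\mu_T(dz)=1$ and the support condition) that this extends to all $T\ge 0$ while preserving positivity and the martingale property. Secondary technical care is needed for the right-derivative convention in the Breeden--Litzenberger step at $K=0$ and for ruling out loss of mass when passing from $\partial_K C$ to $\mu_T$; these are handled by conditions (2) and (3) respectively.
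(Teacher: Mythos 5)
Your proposal follows essentially the same route as the paper: Breeden--Litzenberger recovery of the per-expiry marginal law from $-\partial_K C$ (the paper packages this as Proposition~\ref{prop:arbfreemart}, cited from~\cite{buehler2006expensive}, constructing the measure from $f(K)=1+c'(K)$), followed by Kellerer's theorem to glue the marginals into a positive Markov martingale, with condition~5 supplying the Balayage/convex order via the $(x-K)^+$ integral representation of convex functions. The caveat in your final paragraph is not actually needed: the form of Kellerer's theorem the paper invokes (Theorem~\ref{th:kellerer}) already applies to an arbitrary parameter set $\mathcal{T}$, so no dense-subset extension or separate tightness argument is required to pass to all $T\ge 0$.
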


\noindent
To prove this statement, we first recall per expiry the following proposition~2.3 from~\cite{buehler2006expensive}::

\begin{proposition}\label{prop:arbfreemart}
A function $c:[0,\infty)\mapsto[0,1]$ is arbitrage-free in the sense that there exist a random variable~$Z$
with unit mean such that $c(K)=\E[ (Z-K)^+]$ if and only if
\begin{enumerate}
    \item $c$ is decreasing,
    \item $c$ is convex, and
    \item $c(0)=1$, \label{it:noarbnbo}
$\lim_{K\uparrow \infty} c(K) = 0$ and
                              $c'(0):=\lim_{\delta\downarrow 0} \mbox{$\frac1\delta$} c(\delta)\geq -1$.
\end{enumerate}
Moreover, if $c'(0)=-1$ then $Z$ is positive.
\end{proposition}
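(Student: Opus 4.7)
The plan is to handle the two directions of the equivalence separately and then read off the ``moreover'' clause as an immediate corollary of the explicit construction.

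For necessity, I would proceed by direct verification. Monotonicity and convexity of $c$ follow from pulling the corresponding properties of the pointwise payoff $k\mapsto (z-k)^+$ through the expectation. The normalization $c(0)=1$ combined with $\E[Z]=1$ forces $\E[Z^-]=0$, so $Z\ge 0$ a.s., which automatically constrains $c$ to $[0,1]$. The vanishing limit $\lim_{K\to\infty}c(K)=0$ is dominated convergence with envelope $Z$. For $c'(0)\ge -1$, I would write $(c(\delta)-c(0))/\delta$ as the expectation of a difference quotient of $k\mapsto(z-k)^+$, identify the pointwise limit as $-\mathbf{1}_{Z>0}$, and conclude by dominated convergence that $c'(0)=-\P[Z>0]\in[-1,0]$.

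The substantive direction is sufficiency. Given $c$ satisfying the hypotheses, my approach is to reconstruct $Z$ directly via its cumulative distribution function. Because $c$ is convex, the right derivative $c'_+$ exists everywhere on $[0,\infty)$, is right-continuous, and non-decreasing. I would first pin down $c'_+\in[-1,0]$: the upper bound comes from $c$ being decreasing, while the lower bound propagates from the hypothesis $c'_+(0)\ge -1$ via monotonicity of $c'_+$. Next, $\lim_{K\to\infty} c'_+(K)=0$, since any strictly negative limit would eventually drive $c$ below zero, contradicting $c\ge 0$. This makes
\[
    F(K) := 1 + c'_+(K), \qquad K\ge 0,
\]
a right-continuous, non-decreasing function valued in $[0,1]$ with $F(\infty)=1$, hence the cdf of a probability measure on $[0,\infty)$. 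Defining $Z\sim F$, a standard integration-by-parts (or Fubini) identity yields $\E[(Z-K)^+]=\int_K^\infty (1-F(u))\,du=-\int_K^\infty c'_+(u)\,du=c(K)$, and specializing at $K=0$ gives $\E[Z]=c(0)=1$, so $Z$ is indeed a unit-mean representation.

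The ``moreover'' clause then comes essentially for free: the atom at zero has mass $\P[Z=0]=F(0)=1+c'_+(0)$, so $Z>0$ a.s.\ if and only if $c'_+(0)=-1$. The main technical care I anticipate is not in any single step but in the bookkeeping around the right derivative of the convex function $c$: ensuring $c'_+$ is well-defined and right-continuous non-decreasing from the outset, deriving the boundary behaviour $c'_+(\infty)=0$ cleanly from $c\ge 0$ and $\lim c=0$, and invoking integration by parts in a way that accommodates the possible atom of $F$ at zero.
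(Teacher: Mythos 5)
Your proposal is correct and follows essentially the same route as the paper's proof: both identify $F(K)=1+c'_+(K)$ as a cdf (the paper calls it $f$ and packages it as a Lebesgue--Stieltjes measure plus an explicit atom $(1+c'(0))\delta_0$, you phrase it directly as the distribution function of $Z$), and both read off $\P[Z=0]=1+c'(0)$ for the ``moreover'' clause. The only notable difference is that you explicitly verify $\E[(Z-K)^+]=\int_K^\infty(1-F(u))\,du=c(K)$ via the tail-integral identity, a computation the paper leaves implicit.
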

\begin{proof}
    Since $c$ is convex and decreasing, its right-hand derivative $c'$
    exists and is right-continuous and non-decreasing with $c'\geq -1$. Since $c(0)=1$ and $\lim_{K\uparrow \infty} c(K)=0$
    we also conclude that $c$ is non-negative.
    
    Let $f(K) := 1 + c'(K)$, which is
    a positive, right-continuous and non-decreasing function and which
    therefore implies the existence of some positive $\sigma$-finite measure via
    $\tilde \nu[(a,b]]:=f(b)-f(a)$ cf.~\cite{Aliprantis99InfiniteDimAnalysis} theorem~9.47 pg.\,354.
    The properties $\lim_{K\uparrow\infty} c(K) = 0$ and $c(K) \geq 0$
    imply that $\lim_{K\uparrow\infty} c'(K) = 0$,
    hence we obtain that $\nu^*:=\tilde \nu[(0,\infty)]
    = \lim_{K\uparrow \infty} f(K) - f(0) = 0 - c'(0) \in [0,1]$.    
    The proof
    is complete by defining
    $
        \nu(A) := (1-\nu^*)\, \delta_{0}(A) + \tilde \nu(A)
        = (1+c'(0))\, \delta_{0}(A) + \tilde \nu(A)
    $
    for measurable sets $A \subseteq \R^+_0$. 
    This measure has unit expectation
    as a result of the assumption~$c(0)=1$.

    We note in particular that if and only if~$c'(0)=-1$ then~$\nu$ has mass only in the positive
    numbers, confirming positivity of~$Z$.
\end{proof}

\noindent
To account for term structure we refer to the following historical result from~\cite{Kellerer1972}:

\begin{theorem}[Kellerer 1972]\label{th:kellerer}
     Let $\mathcal{M}=(\mu_T)_{T\in\mathcal{T}}$ be a set of probability
     measures with unit expectation.
     Then, a Markov martingale $Z=(Z_T)_{T\in\mathcal{T}}$ with marginal distributions $\mu_T$
     exists if and only if $\mathcal{M}$ is in Balayage-order, that is
     \[
        \mu_T \preceq \mu_U
     \]
     for all $T,U\in\mathcal{T}$ with $T<U$ where $\preceq$ denotes convex order.
\end{theorem}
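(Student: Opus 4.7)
The proof proceeds in two directions. For necessity, if $Z$ is any martingale with $Z_T \sim \mu_T$, then for any convex function $f$, Jensen's inequality applied to the conditional expectation gives $\E[f(Z_U) \mid \Ff_T] \geq f(Z_T)$ for $T<U$. Taking unconditional expectations yields $\int f\,d\mu_U \geq \int f\,d\mu_T$, which is precisely the convex (Balayage) order. This direction is essentially immediate once one recalls that the convex order is characterized by its action on convex test functions such as $x \mapsto (x-K)^+$, and does not even require the Markov assumption.

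For sufficiency, the plan is to construct the Markov martingale by stitching together pairwise transition kernels. The first ingredient is Strassen's theorem: given two probability measures with $\mu \preceq \nu$ and equal means, there exists a Markov kernel $\kappa(x, dy)$ satisfying the martingale property $\int y\,\kappa(x,dy) = x$ together with $\mu\kappa = \nu$. Applied pairwise to any countable ordered subset $\{T_1 < T_2 < \cdots\} \subset \mathcal{T}$, this yields a Markov martingale with the prescribed marginals on that subset via Kolmogorov's extension theorem, since the resulting finite-dimensional distributions are manifestly consistent by construction.

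The substantial difficulty --- and the heart of Kellerer's original contribution --- lies in extending the construction to uncountable parameter sets while preserving the Markov property. Naive pairwise applications of Strassen need not yield a consistent family, i.e.\ in general $\kappa_{T,V} \neq \kappa_{T,U} \circ \kappa_{U,V}$, so Kolmogorov extension fails. Kellerer's resolution is to distinguish a canonical subclass of martingale transition kernels --- the so-called Lipschitz-Markov kernels --- which are uniquely determined by their source and target marginals and depend monotonically and continuously on them. This canonical selection is stable under composition and under weak limits, so a diagonal/compactness argument along a sequence of refining partitions of $\mathcal{T}$ produces the required Markov martingale.

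I expect the main obstacle to lie precisely in this regularity/consistency step: verifying that the Lipschitz-Markov kernels compose Chapman-Kolmogorov-consistently and that the weak limit along partitions simultaneously preserves the martingale property and the prescribed marginals. Modern treatments (e.g.\ Hirsch-Roynette-Profeta-Yor) reorganize the argument and additionally produce a càdlàg realization. Since the present paper only needs Kellerer's theorem as an \emph{existence} input for Theorem~\ref{th:contarbfree}, the pragmatic route in the appendix is to cite the original work rather than reproduce the full delicate construction.
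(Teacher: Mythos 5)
The paper itself does not prove Kellerer's theorem: it is stated as a cited classical result from \cite{Kellerer1972} and used as a black box in the proof of Theorem~\ref{th:contarbfree}. Your proposal correctly identifies this and is therefore not in conflict with anything in the paper.

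As an account of the actual proof, your sketch is accurate. The necessity direction is exactly as you say: conditional Jensen plus the martingale property gives $\E[f(Z_U)\mid\mathcal{F}_T]\geq f(\E[Z_U\mid\mathcal{F}_T])=f(Z_T)$ for convex $f$, and taking expectations yields the Balayage (convex) order; the Markov hypothesis plays no role here. For sufficiency, you rightly separate the easy case (a fixed countable totally ordered parameter set, where Strassen's martingale-kernel theorem plus Kolmogorov extension finishes the job, since chaining kernels produces consistent Markov finite-dimensional distributions) from the genuinely hard case of an uncountable $\mathcal{T}$, where arbitrary Strassen kernels need not satisfy Chapman--Kolmogorov across a refinement. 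Your identification of Kellerer's Lipschitz-Markov kernels as the canonical, composition-stable and weakly-compact selection that resolves this is the right diagnosis of where the real work lies; the modern reference you name (Hirsch--Roynette--Profeta--Yor) is the standard place this is carried out in full, and also supplies a c\`adl\`ag version. One small point you gloss over in the necessity direction is integrability of $f(Z_T)$ and $f(Z_U)$ for general convex $f$; this is immaterial here since the Balayage order is already determined by the family $f_K(x)=(x-K)^+$, for which integrability follows from $\E|Z_T|<\infty$. Since the paper only needs existence and deliberately cites rather than reproduces the argument, your sketch is at the appropriate level of detail and consistent with the paper's treatment.
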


\begin{proof}[Proof of theorem~\ref{th:contarbfree}]
    In light of proposition~\ref{prop:arbfreemart} and
    theorem~\ref{th:kellerer} it remains to show that~\textit{\ref{it:caf_dt}.}~in theorem~\ref{th:contarbfree}
    implies that $\E[ f(Z_T) ] \leq \E[ f(Z_U) ]$ for any $T<U$ and all
    convex~$f$.

    For a convex function~$f$
    $$
        f(x) = f(0) + f'(0) x + \int_0^x (x-K)^+ \mu(dK)
    $$
    where $\mu[(a,b]] := f'(b)-f'(b)$ is the measure implied by~$f$. In a
    generalized sense\footnote{
        There is a countable number of extremal points (c.f.~\cite{buehler2006expensive}) $x_k$ such that
        $f''(K)\in\R_{\geq 0}$ for $K\not\in\{x_k\}_k$ and $f''(x_k)=p_k \delta_{x_k}$
        for $p_k\in\R_{\geq 0}$ and $\delta_{\cdot}$ denotes the Dirac measure.
    }this can be written as
    $$
        f(x) = f(0) + f'(0) x + \int_0^x (x-K)^+ f''(K) dK
    $$
    where~$ f''(K)\geq 0$.
    Taking expectations proves the statement.
\end{proof}

\subsection{Transition Operators with Discrete Local Volatilities}\label{sec:appendix_DLV} 

We present a proof for theorem~\ref{th:dlvtrs} on page~\pageref{th:dlvtrs}:
we start with the following statement which is essentially from~\cite{AndreasenHuge}:
\begin{theorem}[Construction of Transition Matrices]\label{th:zmatrix}
    Assume that~$M\in\R^{N\times N}$ is a square matrix with columns adding up to~1, and 
    all whose off-diagonal elements are non-positive.
    
    Then, its inverse exists, is non-negative, and its columns add up to~1; in other
    words~$M^{-1}\in\R^{N\times N}$ is a probability matrix.
    
    Moreover, if~$M$ has the martingale property with respect to~$K\in\R^N$
    in the sense that $K'M=K'$ then~$M^{-1}$ has the martingale property with respect to~$K$,
    too, in the sense that $K'=K'M^{-1}$.
\end{theorem}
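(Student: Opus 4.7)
The plan is to reduce the statement to a Neumann-series computation after an appropriate diagonal rescaling, which is the standard route for showing that a Z-matrix is an $M$-matrix. The first observation I would record is that the column-sum condition $\mathbf{1}'M=\mathbf{1}'$, combined with $M_{ij}\le 0$ for $i\neq j$, forces the diagonal to satisfy
$$
M_{ii} \;=\; 1 \;-\; \sum_{k\neq i} M_{ki} \;\ge\; 1,
$$
so in particular $D:=\mathrm{diag}(M_{11},\dots,M_{NN})$ is invertible with positive diagonal.

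Next, I would introduce the right-rescaled matrix $MD^{-1}$ and write it as $MD^{-1}=I-\tilde B$, where $\tilde B_{ij}=-M_{ij}/M_{jj}\ge 0$ for $i\neq j$ and $\tilde B_{ii}=0$. A direct column-sum computation gives
$$
\sum_{i}\tilde B_{ij} \;=\; \frac{1}{M_{jj}}\sum_{i\neq j}(-M_{ij}) \;=\; 1-\frac{1}{M_{jj}} \;\in\;[0,1),
$$
so the induced 1-norm satisfies $\|\tilde B\|_1 = 1-1/\max_j M_{jj} < 1$. This is the main technical point, and it is the only place where the hypotheses are used in a nontrivial way; I expect the book-keeping around which side to rescale on (so that column sums, rather than row sums, are the natural quantity) to be the easiest thing to get wrong, but once that is fixed the bound $\rho(\tilde B)\le\|\tilde B\|_1<1$ is immediate.

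From $\rho(\tilde B)<1$ the Neumann series $\sum_{k\ge 0}\tilde B^k$ converges to $(I-\tilde B)^{-1}$, which is non-negative since each power $\tilde B^k$ is. Consequently $M=(I-\tilde B)D$ is invertible and
$$
M^{-1} \;=\; D^{-1}(I-\tilde B)^{-1} \;=\; D^{-1}\sum_{k\ge 0}\tilde B^k \;\ge\; 0,
$$
since $D^{-1}$ is diagonal with strictly positive entries. This handles existence, non-negativity, and also shows $M^{-1}$ is finite.

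Finally, the preservation of column sums and of the martingale property is pure linear algebra: the identity $\mathbf{1}'M=\mathbf{1}'$ immediately yields $\mathbf{1}'=\mathbf{1}'M^{-1}$ after right-multiplying by $M^{-1}$, and similarly $K'M=K'$ gives $K'=K'M^{-1}$. Together with the preceding two steps this shows that $M^{-1}$ is a column-stochastic matrix satisfying the martingale property with respect to $K$, which is exactly the claim.
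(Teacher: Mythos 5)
Your proof is correct and reaches the same conclusion as the paper, but by a genuinely different and more self-contained route. The paper disposes of the non-singularity and non-negativity of $M^{-1}$ by a black-box citation: it notes that $M$ is a Z-matrix and then invokes two of the Berman--Plemmons equivalent characterizations of non-singular M-matrices (their conditions $I_{29}$ and $N_{38}$) to get existence and non-negativity of the inverse in one stroke. You instead prove the relevant part of that classification from scratch: you right-scale by $D=\mathrm{diag}(M_{jj})$ to normalize the diagonal to $1$, observe that $\tilde B = I - MD^{-1}$ is entrywise non-negative with column sums $1-1/M_{jj}\in[0,1)$, deduce $\rho(\tilde B)\le\|\tilde B\|_1<1$, and sum the Neumann series to get $M^{-1}=D^{-1}\sum_{k\ge 0}\tilde B^k\ge 0$. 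The computation $M_{ii}=1-\sum_{k\neq i}M_{ki}\ge 1$, the choice of right- rather than left-scaling (so that the $1$-norm, a column-sum norm, is the natural contraction estimate), and the use of the spectral-radius bound are all handled correctly. The trade-off is clear: the paper's proof is shorter but imports a textbook result, while yours is longer but elementary and moreover yields a constructive (and manifestly non-negative) series representation of $M^{-1}$. The final algebraic step -- right-multiplying $\mathbf{1}'M=\mathbf{1}'$ and $K'M=K'$ by $M^{-1}$ to transfer the column-sum and martingale properties -- is identical in both arguments.
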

\begin{proof}
A~matrix~$M$ with the above properties is a
Z-matrix, i.e.~$M^{i,i}>0$ and~$M^{i,\ell}\leq 0$ for~$i\not=\ell$. Moreover, 
its columns add up to 1, which means
it is a non-singular M-matrix, c.f.~\cite{alma995837143606533}, chapter~6:
equivalent classification $I_{29}$ of non-singular M-Matrices.

Hence, its inverse exists and is non-negative, c.f.~\cite{alma995837143606533}, chapter~6:
equivalent classification $N_{38}$ of non-singular M-Matrices.

Finally,~$1' M = 1'$ implies~$1'  = 1' M^{-1}$, i.e.~the columns of~$M^{-1}$
add up to~1, too. Similarly, if~$K' M = K'$, then~$K' = K' M^{-1}$.\
\end{proof}

\begin{proof}[Proof of theorem~\ref{th:dlvtrs} on page~\pageref{th:dlvtrs}]
We show that~$Q_{j|j-1}$ defined in~\eqref{eq:implicit} on page~\pageref{eq:implicit}
is indeed a transition operator.
\begin{itemize}
    \item The inverse satisfies the conditions of theorem~\eqref{th:zmatrix} hence
    $Q_{j|j-1}$ is a transition matrix.
    \item We now show $K_j'\cdot Q_{j|j-1}=K_j'$ by showing
    $K'=K'\cdot Q_{j|j-1}^{-1}$. Let
    $$
        U^i_j := \frac{ (\Sigma_{i,j} K^i)^2 (T_j-T_{j-1})}{ K^{i+1} - K^{i-1} } \ .
    $$
    such that
    $$
        w^{i+}_j = \frac{ U^i_j }{ K^{i+1} - K^i } 
        \ \ \ \mbox{and} \ \ \
        w^{i-}_j = \frac{ U^i_j }{ K^i - K^{i-1} }  \ .
    $$
    For $i=2,\ldots,N-1$ this yields
    \begin{eqnarray*}
        (K'\cdot Q_{j|j-1}^{-1})^i
        & = &
        -
        K^{i-1} \frac{ U^i_j }{ K^i - K^{i-1} }  
        +
        K^i + K^i \left( \frac{ U^i_j }{ K^i - K^{i-1} }   + \frac{ U^i_j }{ K^{i+1} - K^i } \right)
        \\
        &  &\qquad \qquad\qquad\qquad\qquad\qquad\qquad\qquad\qquad\qquad - K^{i+1} \frac{ U^i_j }{ K^{i+1} - K^i }  \\
        & = &
        K^i + 
         U^i_j \left(
         \frac{  K^i  - K^{i-1}}{ K^i - K^{i-1} }  +    
         \frac{ K^i- K^{i+1} }{ K^{i+1}{} - K^i }          
        \right)\ = \  K^i \ .
    \end{eqnarray*}
\end{itemize}
\end{proof}

\begin{proof}[Proof of theorem~\ref{th:lintransition} on page~\pageref{th:lintransition}]
\begin{eqnarray}
    1 L^{\cdot,i}_{j|j-1} & = &
            \sum_{\ell=1}^{N_j} 
        \left( 
        \frac{ ( K^i_{j-1} - K^{\ell+1}_j )^+ - ( K^i_{j-1} - K^\ell_j )^+ }{ K^{\ell+1}_j - K^{\ell}_j }
        -
        \frac{ ( K^i_{j-1} - K^{\ell}_j )^+ - ( K^i_{j-1} - K^{\ell-1}_j )^+ }{ K^{\ell}_j - K^{\ell-1}_j }
        \right)
        \\
        & = &
        \frac{ ( K^i_{j-1} - K^{N_j+1}_j )^+ - ( K^i_{j-1} - K^{N_j}_j )^+ }{ K^{N_j+1}_j - K^{N_j}_j }
        -
        \frac{ ( K^i_{j-1} - K^1_j )^+ - ( K^i_{j-1} - K^0_j )^+ }{ K^1_j - K^0_j }
        \\
        & = &
        0
        -
        \frac{ ( K^i_{j-1} - K^1_j )^+ - ( K^i_{j-1} - K^0_j )^+ }{ K^1_j - K^0_j } \ = \ 1 \ .
\end{eqnarray}
Define for each $K^i_{j-1}$ $l(i) := \min\{\ell=0,\ldots,N_j+1:\,K^i_{j-1} > K^\ell_j\}$
where we added a fictitious $K^{N_j+1}_j\gg K^{N_j}_j$. Then
\begin{eqnarray}
    K_j L^{\cdot,i}_{j|j-1} & = &
            \sum_{\ell=1}^{N_j} K_j^\ell
        \left( 
        \frac{ ( K^i_{j-1} - K^{\ell+1}_j )^+ - ( K^i_{j-1} - K^\ell_j )^+ }{ K^{\ell+1}_j - K^{\ell}_j }
        -
        \frac{ ( K^i_{j-1} - K^{\ell}_j )^+ - ( K^i_{j-1} - K^{\ell-1}_j )^+ }{ K^{\ell}_j - K^{\ell-1}_j }
        \right)
        \\
        & = &
            \sum_{\ell=0}^{N_j} ( K_j^\ell - K_j^{\ell+1} )  
        \frac{ ( K^i_{j-1} - K^{\ell+1}_j )^+ - ( K^i_{j-1} - K^\ell_j )^+ }{ K^{\ell+1}_j - K^{\ell}_j }
        \\
        & = &
            \sum_{\ell=0}^{N_j} 
        ( K^i_{j-1} - K^\ell_j )^+ - ( K^i_{j-1} - K^{\ell+1}_j )^+ 
        \\
        & = &
            \sum_{\ell=0}^{N_j} \1_{K^i_{j-1} \geq K^{\ell+1}_j } ( K^\ell_j - K^{\ell+1}_j ) 
            +  \sum_{\ell=0}^{N_j}  \1_{K^{\ell}_j < K^i_{j-1} \leq K^{\ell+1}_j } ( K^i_{j-1}  - K^\ell_j  )
        \\
        & = &
            K^{l(i)}_j
            + ( K^i_{j-1}  - K^{l(i)}_j  )  \ = \ K^i_{j-1} \ .
\end{eqnarray}
\end{proof}

\begin{proof}[Proof of theorem~\ref{th:dlvtoQ} on page~\pageref{th:dlvtoQ}]
This next step is to show that a discrete local volatility as defined in~\eqref{eq:dlv2}
reprices the input call prices. To do so
    we re-iterate the proof of section~4.3.2 in~\cite{DLV}
    and show that with $\Sigma$ as defined above $ Q_{j|j-1}^{-1} p_j = \bar p_{j-1}$. We first note that
    $$
        \Sigma^2_{i,j} := \frac{2}{(K^i)^2} \left( \frac{ C_j^i - \bar C_{j-1}^i }{ T_j - T_{j-1} } \right)/\left( \frac{ p^i_j}{ \frac12 (K^{i+1}_j-K_j^{i-1}) } \right)
    $$
    hence $w$ defined in~\eqref{eq:wDLV} can be written as~\eqref{eq:wself}.
    For $i\in\{2,\ldots,N-1\}$ we therefore get
    \begin{align*}
        ( Q_{j|j-1}^{-1}\cdot p_j  )^i
        & = 
            - p_j^{i-1}  w^{(i-1)+}_j
            +
            p_j^{i} (1 + w^{i-}_j + w^{i+}_j)
            -
             p_j^{i+1}  w^{(i+1)-}_j
       \\ & = 
            p^i_j
            - ( C_j^{i-1} - \bar C_{j-1}^{i-1} )  \bar \gamma^{(i-1)+}_j
            +
            ( C_j^{i} - \bar C_{j-1}^{i} ) (\bar \gamma^{i-}_j + \bar \gamma^{i+}_j)
            -
            ( C_j^{i+1} - \bar C_{j-1}^{i+1} ) \bar \gamma^{(i+1)-}_j
        \ 
        \\ & \supstack{(*)} =\bar   p_{j-1} \ .
    \end{align*}
    The last equation~$(*)$ follows from
    \begin{align*}
         C_{j}^{i-1}  \bar \gamma^{(i-1)+}_j
                     -
                      C_{j}^{i} (\bar \gamma^{i-}_j + \bar \gamma^{i+}_j)
                      +
                       C_{j}^{i+1} \bar \gamma^{(i+1)-}_j   
       & =
        \Big(  C_{j}^{i+1} \bar \gamma^{(i+1)-}_j  - C_{j}^{i}  \bar \gamma^{i+}_j \Big)
        - \Big( C_{j}^{i} \bar \gamma^{i-}_j  -  C_{j}^{i-1}  \bar \gamma^{(i-1)+}_j \Big)
       \\& =
       dC^{i+1}_j
        - dC^{i}_j \\&=  p^i_j \ .
    \end{align*}
    For $i=1$ recall that $p^1_j = \frac{ C^2_j-C^1_j }{ K^2_j - K^1_j }$ with $C^1_j=1-K_j^1$
    hence
   $$
            ( Q_{j|j-1}^{-1} \cdot p_j  )^1
            = p_j^1 - \left( C^2_j - \bar C^2_{j-1} \right)\frac1{ K^2_j-K^1_j } 
            = \frac{ C^2_j-C^1_j }{ K^2_j - K^1_j }- \left( C^2_j -\bar  C^2_{j-1} \right)\frac1{ K^2_j-K^1_j } = \bar p^1_{j-1} \ .
    $$
    Similarly, for $i=N$ we have $p^N_j = \frac{ C^{N+1}_j-C^N_j }{ K_{N+1} - K_1 }$
    with $C^{N+1}_j=0$ such that
    $$
        ( Q_{j|j-1}^{-1} \cdot p_j  )^{N_j}
        = p^{N_j}_j - ( C^{N_j}_j -\bar  C^{N_j}_{j-1} ) \frac1{ K^{N_j+1}_j - K^{N_j}_j }
        = \frac{ C^{{N_j}+1}_j-C^{N_j}_j }{ K^{{N_j}+1}_j - K^{N_j}_j } - ( C^{N_j}_j - \bar C^{N_j}_{j-1} ) \frac1{ K^{{N_j}+1}_j - K^{N_j}_j }
        = \bar p^{N_j}_{j-1} \ .
    $$
\end{proof}

   \bibliography{sorted}
   \bibliographystyle{alpha}

\end{document}